\newtheorem{theorem}{Theorem}
\newtheorem{lemma}[theorem]{Lemma}
\newtheorem{definition}[theorem]{Definition}
\newcommand{\qed}{\hfill\rule{1ex}{1em}\penalty-1000{\par\medskip}}
\newcommand{\Substr}{\mathit{Substr}}
\newcommand{\Occ}{\mathit{Occ}}
\newcommand{\SA}{\mathit{SA}}
\newcommand{\lcp}{\mathit{lcp}}
\newcommand{\LCP}{\mathit{LCP}}
\newcommand{\suflen}{\mathit{slen}}
\newcommand{\prelen}{\mathit{plen}}
\newcommand{\derive}{\mathit{val}}
\newcommand{\height}{\mathit{height}}
\newcommand{\llink}{\mathit{Llink}}
\newcommand{\rlink}{\mathit{Rlink}}
\definecolor{bl}{gray}{0.5}
\newcommand{\full}[2]{#1}
\newcommand{\ignore}[1]{}
\newcommand{\logstar}{\log^*\hspace{-.9mm}}
\begin{document}
\title{
  Restructuring Compressed Texts without Explicit Decompression
}
\author{
  Keisuke~Goto$^1$\quad Shirou~Maruyama$^1$\quad Shunsuke~Inenaga$^1$\quad
  Hideo~Bannai$^1$\\ Hiroshi Sakamoto$^2$\quad Masayuki~Takeda$^1$\\
{$^1$ Kyushu University, Japan}\\
{$^2$ Kyushu Institute of Technology, Japan}\\
{\texttt{shiro.maruyama@i.kyushu-u.ac.jp}}\\
{\texttt{\{keisuke.gotou,bannai,inenaga,takeda\}@inf.kyushu-u.ac.jp}}\\
{\texttt{hiroshi@ai.kyutech.ac.jp}}}
\date{}
\maketitle
\begin{abstract}  
  We consider the problem of {\em restructuring} compressed texts
  without explicit decompression.
  We present algorithms which allow conversions from
  compressed representations of a string $T$
  produced by any grammar-based compression algorithm,
  to representations produced by several specific compression
  algorithms including LZ77, LZ78, run length encoding, and
  some grammar based compression algorithms.
  These are the first algorithms that achieve running times polynomial
  in the size of the compressed input and output representations of $T$.
  Since most of the representations we consider can achieve exponential
  compression, our algorithms are theoretically faster in the worst
  case, than any
  algorithm which first decompresses the string for the conversion.
\end{abstract}
\thispagestyle{empty}
\clearpage
\setcounter{page}{1}
\section{Introduction}
Data compression is an indispensable technology for 
the handling of large scale data available today.
The traditional objective of compression has been to save
storage and communication costs,
whereas actually using the data normally requires a
decompression step which can require enormous computational
resources.
However, recent advances in {\em compressed string processing} algorithms
give us an intriguing new perspective in which compression can be regarded as a
form of pre-processing which not only reduces space requirements for
storage, but allows efficient processing of the
strings,
including
compressed pattern matching~\cite{lifshits07:_proces_compr_texts,takanori11:_faster_subseq_dont_care_patter,gawrychowski11:_optim_lzw,gawrychowski11:_patter_lempel_ziv},
string indices~\cite{navarro07:_compr,claudear:_self_index_gramm_based_compr,kreft11:_self_based_lz77},
edit distance and its variants~\cite{Cormode07,hermelin09:_unified_algor_accel_edit_distan,tiskin11:_towar},
and various other applications~\cite{GasieniecSWAT96,inenaga09:_findin_charac_subst_compr_texts,goto10:_fast_minin_slp_compr_strin,matsubara_tcs2009,philip11:_random_acces_gramm_compr_strin}.
These methods assume a compressed representation of the text as input, and process them without
explicit decompression.
An interesting property of these methods is that they can be theoretically 
-- and sometimes even practically -- faster than algorithms which work
on an uncompressed representation of the same data.

The main focus of this paper is to develop a framework in which
various processing on strings can be conducted entirely in the 
world of compressed representations.
A primary tool for this objective is {\em restructuring},
or conversion, of the compressed representation.
Key results for this problem were obtained independently by
Rytter~\cite{rytter03:_applic_lempel_ziv} and 
Charikar {\em et al.}~\cite{charikar05:_small_gramm_probl}:
given a non-self referential LZ77-encoding of size $n$ that
represents a string of length $N$,
they gave algorithms for constructing a balanced grammar
of size at most $O(n\log (N/n))$
in output linear time.
The size of the resulting grammar is an $O(\log (N/g))$ approximation of the
smallest grammar whose size is $g$.
Grammars are generally easier to handle than the LZ-encodings, for example,
in compressed pattern
matching~\cite{gawrychowski11:_patter_lempel_ziv}, 
and this result is the motivational backbone of
many efficient algorithms on grammar compressed strings.

\noindent {\textbf{Our Results:}} In this paper, 
we present a comprehensive collection of new algorithms
for restructuring to and from compressed texts
represented in terms of run length encoding (RLE), LZ77 and LZ78 encodings, 
grammar based compressor RE-PAIR and BISECTION,
edit sensitive parsing (ESP),
straight line programs (SLPs), and admissible grammars.
All algorithms achieve running times polynomial
in the size of the compressed input and output representations of the string.
Since (most of) the representations we consider can achieve exponential
compression, our algorithms are theoretically faster in the worst
case, than any
algorithm which first decompresses the string for the conversion.
Figure~\ref{fig:results_summary} summarizes our results.
Our algorithms immediately allow the following
applications to be solvable in polynomial time in the compressed
world:

\begin{figure}
  \begin{center}
\includegraphics[width=0.9\textwidth]{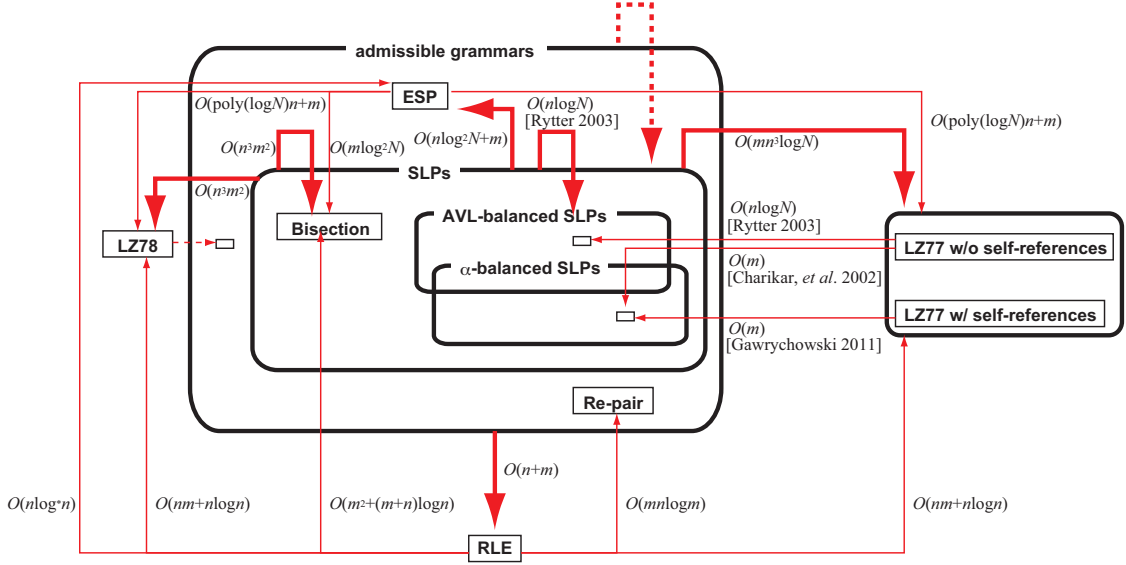}
  \end{center}
  \caption{Summary of transformations between compressed
    representations.
    The label of each arc shows the time complexity of each transformation,
    where $n$ and $m$ are respectively the input and output sizes of each transformation,
    and $N$ is the length of the uncompressed string.
    The broken arcs mean naive $O(n)$-time transformations.
    Complexities without references are results shown in this paper.
  }
  \label{fig:results_summary}  
\end{figure}

\noindent\textbf{Dynamic compressed texts:}
  Although data structures for {\em dynamic} compressed texts
  have been studied somewhat in the
  literature~\cite{chan05:_dynam,gonzalez09:_rank,chan07:_compr,makinen08:_dynam,russo08:_dynam_fully_compr_suffix_trees,lee09:_dynam},
  grammar based or LZ77 compression have not been considered
  in this perspective.
  It has recently been argued that for {\em highly repetitive} strings,
  grammar based compression and LZ77 compression algorithms are better suited
  and achieve better
  compression~\cite{claudear:_self_index_gramm_based_compr,kreft11:_self_based_lz77}.

  Modification of the grammar corresponding to edit
  operations on the string can be conducted in $O(h)$
  time, where $h$ is the height of the grammar.
  (Note that when the grammar is balanced, $h =O(\log N)$ even in the
  worst case.)
  However, these modifications are {\em ad-hoc}, and do not assure that the
  resulting
  grammar is a {\em good} compressed representation of the string,
  and repeated edit operations will inevitably cause degradation on the
  compression ratio.
  Note that previous work of Rytter and Charikar {\em et al.} are not
  sufficient in this respect:
  their algorithms can balance an arbitrary grammar, but they
  must be given an LZ-encoding of the modified string in order for the
  grammar to be small.

\noindent\textbf{Post-selection of compression format:}
  Some methods in the field of data mining and machine learning
  utilize compression as a means of detecting and extracting
  meaningful information from 
  string data~\cite{cormode05:_subst,Cilibrasi05clusteringby}.
  Compression of a given string is achieved by exploiting various
  regularities contained in the string, and
  since different compression algorithms capture different
  regularities, the usefulness of a specific representation will vary
  depending on the application.
  As it is impossible to predetermine the {\em best} compression
  algorithm for all future applications,
  conversion of the representation is an essential task.

  For example, the normalized compression
  distance (NCD)~\cite{Cilibrasi05clusteringby}
  between two strings $X$ and $Y$ 
  with respect to compression algorithm $A$ is
  defined by the values $C_A(XY)$, $C_A(X)$, and $C_A(Y)$ which
  respectively denote the sizes of the compressed representation of 
  strings $XY$, $X$, and $Y$ when compressed by algorithm $A$.
  Restructuring enables us to solve, in the compressed world,
  the problem of calculating the NCD with respect to some compression
  algorithm, given strings which were compressed previously by 
  a (possibly) different compression algorithm.

\section{Preliminaries}

\subsection{Notations}

Let $\Sigma$ be a finite {\em alphabet}.
An element of $\Sigma^*$ is called a {\em string}.
The length of a string $S$ is denoted by $|S|$. 
The empty string $\varepsilon$ is a string of length 0,
namely, $|\varepsilon| = 0$.
For a string $S = XYZ$, $X$, $Y$ and $Z$ are called
a \emph{prefix}, \emph{substring}, and \emph{suffix} of $S$, respectively.
The set of all substrings of a string $S$ is denoted by $\Substr(S)$.
The $i$-th character of a string $S$ is denoted by 
$S[i]$ for $1 \leq i \leq |S|$,
and the substring of a string $S$ that begins at position $i$ and
ends at position $j$ is denoted by $S[i:j]$ for $1 \leq i \leq j \leq |S|$.
For convenience, let $S[i:j] = \varepsilon$ if $j < i$.
For any strings $S$ and $P$,
let $\Occ(S,P)$ be the set of occurrences of $P$ in $S$, i.e.,
$\Occ(S,P) = \{k > 0 \mid S[k:k+|P|-1] = P\}$.

We shall assume that the computer word size is at least $\log |S|$, 
and hence, values representing lengths and positions of $S$
in our algorithms can be manipulated in constant time.

\subsection{Suffix Arrays and LCP Arrays}
The suffix array $\SA$~\cite{manber93:_suffix} of any string $S$
is an array of length $|S|$ such that
$\SA[i] = j$, where $S[j:|S|]$ is the $i$-th lexicographically smallest suffix of $S$.
Let $\lcp(S_1, S_2)$ is the length of the longest common prefix of
$S_1$ and $S_2$.
The \emph{lcp} array of any string $S$ is an array of length $|S|$ such that
$\LCP[i]$ is $\lcp(S[\SA[i-1]:|S|],S[\SA[i]:|S|])$ for $2 \leq i \leq |S|$, 
and $\LCP[1] = 0$.
The suffix array for any string $S$
can be constructed in $O(|S|)$ 
time~(e.g.~\cite{Karkkainen_Sanders_icalp03})
assuming an integer alphabet.
Given the string and suffix array, the lcp array can also be calculated
in $O(|S|)$ time~\cite{Kasai01}.

\subsection{Run Length Encoding}

\begin{definition}
The Run-Length (RL) factorization of a string $S$ is
the factorization $f_1,\ldots,f_n$ of $S$ such that for every $i=1,\ldots,n$,
factor $f_i$ is the longest prefix of $f_i \cdots f_n$ with $f_i\in F$,
where
$F=\bigcup_{a\in\Sigma}\{ a^p \mid p>0 \}$.
\end{definition}
We note that each factor $f_i$ can be written as $f_i=a_i^{p_i}$ for some symbol 
$a_i\in\Sigma$ and some integer $p_i>0$ and
the repeating symbols $a_i$ and $a_{i+1}$
of consecutive factors $f_i$ and $f_{i+1}$ are different.
The output of RLE is a sequence of pairs of symbol $a_i$ and integer $p_i$.
\full{The number of distinct bigrams occurring in $S$ is at most $2n-1$,
since these are $\{ a_i a_i \mid 1 \leq i \leq n\}\cup\{ a_i a_{i+1}
\mid 1 \leq i < n\}$.}

\subsection{LZ Encodings}
LZ encodings are dynamic dictionary based encodings.
There are two main variants for LZ encodings,
LZ78 and LZ77.

The LZ78 encoding~\cite{LZ78} has several variants.
One most popular variant would be the LZW encoding~\cite{LZW}, 
which is based on the LZ78 factorization defined below.
\begin{definition}[LZ78 factorization]
The LZ78-factorization of a string $S$ is
the factorization $f_1,\ldots,f_n$ of $S$ where for every $i=1,\ldots,n$,
factor $f_i$ is the longest prefix of $f_i \cdots f_n$ with $f_i\in F_i$, 
where
$F_i$ is defined by $F_1=\Sigma$ and $F_{i+1} = F_i\cup\{f_i f_{i+1}[1]\}$.
\end{definition}
The output is the sequence of IDs of factors $f_i$ in $F_i$.
We note that $F_i$ can be recovered from this sequence and thus is
not included in the output.

The LZ77 encoding~\cite{LZ77} also has many variants.
The LZSS encoding~\cite{LZSS} is based on 
the LZ77 factorization below.
The LZ77 factorization has two variations depending upon
whether self-references are allowed.
\begin{definition}[LZ77 factorization w/o self-references]
The LZ77-factorization without self-references of a string $S$ is
the factorization $f_1,\ldots,f_n$ of $S$ such that for every $i=1,\ldots,n$,
factor $f_i$ is the longest prefix of $f_i \cdots f_n$ with $f_i\in F_i$,
where
$F_i=\Substr(f_1\cdots f_{i-1})\cup\Sigma$.
\end{definition}

\begin{definition}[LZ77 factorization w/ self-references]
The LZ77-factorization with self-references of a string $S$ is
the factorization $f_1,\ldots,f_n$ of $S$ such that for every $i=1,\ldots,n$,
factor $f_i$ is the longest prefix of $f_i \cdots f_n$ with $f_i\in F_i$,
where
$F_i=\Substr(f_1\cdots f_{i-1}f_i^{\prime})\cup\Sigma$, where
$f_i^{\prime}$ is the prefix of $f_i$ obtained by removing the last symbol.
\end{definition}
The LZSS is based on the LZ77 with self-references and
its output is a sequence of pointers to factors $f_i$.

\subsection{Grammar-based compression methods}
An \emph{admissible grammar}~\cite{Kieffer00admissible} 
is a context-free grammar that generates a single string. 

\subsubsection{Re-pair}
Starting with $w_1=S$,
we repeat the following until no bigrams occur more than once in $w_i$:
we find a most frequent bigram $\gamma_i$ in the string $w_i$,
and then replace every non-overlapping occurrence of $\gamma_i$ in $w_i$
with a new variable $X_i$ to obtain string $w_{i+1}$.
Let $r$ be the number of iterations.
The resulting grammar has the production rules of
$\{ X_i \to \gamma_i \}_{i=1}^{r} \cup \{ X_{r+1} \to w_{r+1} \}$.

\begin{theorem}[\cite{charikar05:_small_gramm_probl}]
For any string $S$ of length $N$,
Re-pair constructs 
in $O(N)$ time
an admissible grammar of size $O(g(N/ \log N)^{2/3})$,
where $g$ is the size of the smallest grammar that derives $S$.
\end{theorem}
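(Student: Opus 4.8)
The plan is to prove the two assertions separately, starting with the running time. Here I would use the linear-time implementation of Re-pair due to Larsson and Moffat: keep the current string $w_i$ in a doubly linked list so that, at each occurrence, replacing a bigram and repairing the two bigrams straddling it is $O(1)$ work; for every distinct bigram keep a record storing its current count together with a linked list threading all of its occurrences in $w_i$, accessed through a hash table on pairs of symbols; and to extract a most frequent bigram, use a priority queue with $\lceil\sqrt{N}\rceil$ buckets (one per frequency below $\sqrt N$, plus a single shared bucket for the few bigrams of frequency at least $\sqrt N$), which gives $O(1)$ amortized cost per operation. Replacing a bigram of count $k$ then takes $O(k)$ time; since the $k'$ non-overlapping occurrences that get replaced shorten $w_i$ by exactly $k'$ and $k\le 2k'$, the total count over the whole run satisfies $\sum_i k_i \le 2\sum_i k_i' < 2N$, so Re-pair, including its $O(N)$ initialization, runs in $O(N)$ time.

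For the size bound I would follow the argument of Charikar {\em et al.}~\cite{charikar05:_small_gramm_probl}. Two facts about the smallest grammar are used: (i) $g=\Omega(\log N)$, because a grammar of total size $g$ derives a string of length at most $2^{O(g)}$; and (ii) $S$ admits an LZ77 factorization into $m=O(g)$ factors, since any grammar of size $g$ can be read off left to right as an LZ77 parse with $O(g)$ phrases (cf.~\cite{rytter03:_applic_lempel_ziv}). Now fix a threshold $\tau$ and split Re-pair's execution at the first round in which no bigram occurs $\tau$ or more times. In the first phase each round deletes at least $\tau/2$ symbols from the working string, so there are at most $2N/\tau$ such rounds, adding $O(N/\tau)$ symbols to the grammar. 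For the second phase one shows that the remaining part of the grammar — the final working string plus every rule created from that point on — has total size $O(g\tau^2)$; combining this with the first phase gives $|G|=O(N/\tau + g\tau^2)$, and choosing $\tau=(N/g)^{1/3}$ balances the two terms to yield $|G|=O(N^{2/3}g^{1/3})=O(g\,(N/g)^{2/3})=O(g\,(N/\log N)^{2/3})$ by (i); when $(N/g)^{1/3}<2$ the claimed bound is trivial because $|G|=O(N)$ always.

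The bookkeeping for the time bound and the final optimization of $\tau$ are routine. The real obstacle is the second-phase bound $O(g\tau^2)$ on the residual string together with the rules still to be produced: this is the only place where the existence of a small grammar (equivalently, a short LZ77 parse) for $S$ must be leveraged against Re-pair's purely local, bigram-by-bigram behaviour, and it requires a careful combinatorial argument showing that once all bigram multiplicities have dropped below $\tau$, no string representing such an $S$ can remain longer than $O(g\tau^2)$. I expect most of the work to lie there, and getting the exponents of that lemma right is precisely what produces the unusual $2/3$ in the statement.
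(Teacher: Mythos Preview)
The paper does not give a proof of this theorem at all: it is stated with a citation to Charikar {\em et al.}~\cite{charikar05:_small_gramm_probl} and immediately followed by the next subsection, so there is no ``paper's own proof'' to compare against. Your proposal is therefore not a reconstruction of anything in this paper but rather a sketch of the original Charikar {\em et al.} argument (together with the Larsson--Moffat linear-time implementation for the $O(N)$ claim), and as such it is a reasonable outline: the phase split at a frequency threshold $\tau$, the $O(N/\tau)$ bound on the first phase, the $O(g\tau^2)$ bound on the second, and the balancing choice $\tau=(N/g)^{1/3}$ are exactly the skeleton of the cited proof. You have also correctly identified where the real content lies, namely the combinatorial lemma bounding the residual grammar once every bigram frequency has dropped below $\tau$; your write-up does not yet contain that argument, only a pointer to it, so the proposal is an accurate plan rather than a complete proof.
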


\subsubsection{Bisection}
\label{subsubsection:bisection}
The Bisection algorithm~\cite{Kieffer00admissible,Kieffer00MPM}
constructs a grammar that can be described recursively as follows:
the variable representing string $S$ ($|S|\geq 2$)
is derived by the rule $X \rightarrow YZ$, with
$|Y| = 2^k$ and $|Z| = |X| - 2^k$, where $k$ is the largest integer
s.t. $2^k < |X|$.
The production rules for $S[1:2^k]$ and $S[2^k+1:|S|]$ are defined recursively.
Whenever
$S[i:i+q-1] = S[j:j+q-1]$ for some $i,j,q \geq 1$ which appear in the
above construction, the same variable is to be used for deriving these substrings.

\begin{theorem}[\cite{charikar05:_small_gramm_probl}]
For any string $S$ of length $N$,
Bisection constructs an admissible grammar of size $O(g(N/ \log N)^{1/2})$,
where $g$ is the size of the smallest grammar that derives $S$.
\end{theorem}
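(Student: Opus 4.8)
The plan is to bound the size of the Bisection grammar by counting the distinct substrings of $S$ that can occur as right-hand sides of its rules, grouped by length. First I would pin down the combinatorial structure of the decomposition. Call a block \emph{full} if its length is a power of two and it arose as the prefix $Y$ of some split $X\to YZ$ (or equals $S$ when $|S|$ is a power of two), and \emph{fringe} otherwise. A full block of length $2^i\ge 2$ splits into two full blocks of length $2^{i-1}$, so full blocks beget only full blocks; a fringe block of length $\ell$ splits off a full block whose length is the highest set bit of $\ell$ together with a strictly shorter fringe block. Hence every fringe block is a suffix of $S$ starting at one of the $O(\log N)$ prefix-sum positions of the binary digits of $N$, so there are only $O(\log N)$ distinct fringe blocks. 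For the full blocks, the ``source'' full blocks (those spawned directly from a fringe block, or equal to $S$) are pairwise disjoint substrings of $S$, so their lengths sum to at most $N$; since a source full block of length $2^{j}$ contains exactly $2^{j-i}$ full sub-blocks of length $2^i$, the total number of occurrences of length-$2^i$ full blocks is at most $N/2^i$. Therefore the grammar has at most $\sum_{i\ge 0}\min(N/2^i,\,d_i)+O(\log N)$ rules, where $d_i$ is the number of \emph{distinct} substrings of $S$ of length $2^i$.

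Next I would bound $d_\ell$ in terms of $g$. Fix the (unique) parse tree of a smallest grammar for $S$; for a substring $w=S[p:p+\ell-1]$ with $\ell\ge 2$, let $v$ be the highest node whose yield contains $[p,p+\ell-1]$ but none of whose children's yields does (such a branching $v$ exists since a leaf yields one symbol). Then $w$ straddles a boundary between consecutive children of $v$ and is completely determined by the rule used at $v$, the index of the child in which $w$ begins, and the offset of $w$'s start within that child, because $|w|=\ell$ is fixed and the children's yields are fixed by $v$'s rule. Summing the number of (child-index, offset) pairs, at most $\ell$ times the right-hand-side length, over all rules gives $d_\ell\le g\ell$. (Alternatively, invoke $z\le g$ for the LZ77 factor count $z$ and the fact that the leftmost occurrence of any length-$\ell$ substring must touch an LZ77 phrase boundary, giving $d_\ell=O(z\ell)$.)

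Plugging $d_{2^i}\le g\cdot 2^i$ into the sum and splitting at the crossover $2^i\approx\sqrt{N/g}$, both geometric tails are dominated by their terms at the crossover, each of order $\sqrt{gN}$, so the Bisection grammar has size $O(\sqrt{gN})+O(\log N)=O(\sqrt{gN})$. Finally I would convert this to the stated form using the standard fact that any grammar generating a length-$N$ string has size $\Omega(\log N)$ (a grammar of size $g$ derives a string of length at most $2^{O(g)}$, since the maximum yield is bounded by the product of the right-hand-side lengths): writing $\sqrt{gN}=\sqrt{N/\log N}\cdot\sqrt{g\log N}$ and using $\log N=O(g)$ yields $\sqrt{g\log N}=O(g)$, hence size $O(g\sqrt{N/\log N})$. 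The main obstacle is the first step --- correctly characterizing which substrings occur as blocks and proving that the number of length-$2^i$ block occurrences stays $O(N/2^i)$ despite the misalignments caused by fringe blocks; once that and the distinct-substring bound $d_\ell=O(g\ell)$ are in place, the summation and the conversion via $g=\Omega(\log N)$ are routine.
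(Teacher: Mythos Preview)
The paper does not prove this theorem; it merely quotes it with a citation to Charikar \emph{et al.}\ \cite{charikar05:_small_gramm_probl}, so there is no in-paper argument to compare against. Your proof is correct and is essentially the original argument from that reference: the classification into $O(\log N)$ fringe (suffix) blocks and $O(N/2^i)$ full blocks at each level $i$, the distinct-substring bound $d_\ell = O(g\ell)$ (their ``$m_k$ lemma''), the geometric split at $2^i\approx\sqrt{N/g}$ yielding $O(\sqrt{gN})$, and the final conversion via $g=\Omega(\log N)$ are exactly the steps Charikar \emph{et al.}\ use. One minor wording issue: when you say the substring is determined by ``the offset of $w$'s start within that child,'' you should make explicit that because $w$ straddles the next child boundary this offset lies in the last $\ell-1$ positions of the child's yield, which is what actually gives the factor $\ell$ rather than the (potentially exponential) yield length; your subsequent sentence counts correctly, but the phrasing just before it is loose.
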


\subsection{Edit-sensitive parsing (ESP)}
A string $a^k$ ($k\geq 2$) is called 
a repetition of symbol $a$, and $a^+$ is its abbreviation.
We let $\log^{(1)}n=\log n$, $\log^{(i+1)}=\log\log^{(i)}n$,
and $\logstar n=\min\{i\mid \log^{(i)}n\leq 1\}$.
For example, $\logstar n\leq 5$ for any $n\leq 2^{65536}$.
We thus treat $\logstar n$ as a constant for sufficiently large $n$.

We assume that any context-free grammar $G$ is 
{\em admissible}, i.e., $G$ derives just one string and
for each variable $X$, exactly one production rule $X\to \alpha$ exists.
The set of variables is denoted by $V(G)$, and
the set of production rules, called dictionary, is denoted by $D(G)$.
We also assume that $X\to\alpha\in D(G)$ and $Y\to\alpha\in D(G)$ 
implies $X=Y$ because one of them is unnecessary.
We use $V$ and $D$ instead of $V(G)$ and $D(G)$ when $G$ is omissible.
The string derived by $D$ from a string $S\in(\Sigma\cup V)^*$
is denoted by $S(D)$.
For example, when $S=aYY$ and $D=\{X\to bc,Y\to Xa\}$,
we obtain $S(D)=abcabca$.

For any string, it is uniquely partitioned to
$w_1a^+_1w_2a^+_2\cdots w_ka^+_kw_{k+1}$ by maximal repetitions, 
where each $a_i$ is a symbol and $w_i$ is a string containing no repetition.
Each $a^+_i$ is called Type1 metablock,
$w_i$ is called Type2 metablock if $|w_i|\geq \log^*n$,
and other short $w_i$ is called Type3 metablock,
where if $|w_i|=1$, this is attached to $a^+_{i-1}$ or $a^+_i$,
with preference $a^+_{i-1}$ when both are possible.
Thus, any metablock is longer than or equal to two.

Let $S$ be a metablock and $D$ be a current dictionary
starting with $D=\emptyset$.
We set $ESP(S,D)=(S',D\cup D')$ for 
$S'(D')=S$ and $S'$ described as follows:

\begin{enumerate}
\item When $S$ is Type1 or Type3 of length $k\geq 2$,
\begin{enumerate}
\item If $k$ is even, let $S'=t_1t_2\cdots t_{k/2}$, and
make $t_i\to S[2i-1:2i]\in D'$.
\item If $k$ is odd, 
let $S'=t_1t_2\cdots t_{(k-3)/2}\: t$, and make
$t_i\to S[2i-1:2i]\in D'$ and $t\to S[k-2:k]\in D'$
where $t_0$ denotes the empty string for $k=3$.
\end{enumerate}
\item When $S$ is Type2,
\begin{enumerate}
\item[(c)] 
for the partitioned $S=s_1s_2\cdots s_k$ $(2\leq |s_i|\leq 3)$
by {\em alphabet reduction}, 
let $S'=t_1t_2\cdots t_k$, and make $t_i\to s_i\in D'$. 
\end{enumerate}
\end{enumerate}

Cases (a) and (b) denote a typical {\em left aligned parsing}.
For example, in case $S=a^6$, $S'=x^3$ and $x\to a^2\in D'$,
and in case $S=a^9$, $S'=x^3y$ and $x\to a^2,y\to aaa\in D'$.
In Case (c), we omit the description of 
alphabet reduction~\cite{Cormode07} 
because the details are unnecessary in this paper.

\full{
Case (b) is illustrated in Fig.~\ref{type1} for a Type1 string,
and the parsing manner in Case (a) is obtained by
ignoring the last three symbols in Case (b).
Parsing for Type2 is analogous.
Case (c) is illustrated in Fig.~\ref{type2}.
}{}

\full{
\begin{figure}[bt]
\begin{center}
\includegraphics[scale=.5]{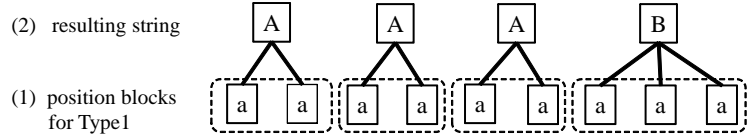}
\end{center}
\vspace{-.5cm}
\caption{
Parsing for Type1 string:
Line (1) is an original Type1 string $S=\mbox{a}^9$
with its position blocks.
Line (2) is the resulting string $\mbox{AAAB}$, and
the production rules $\mbox{A}\to \mbox{aa}$ and $\mbox{B}\to \mbox{aaa}$.
Any Type3 string is parsed analogously.
}
\label{type1}
\end{figure}
}
\full{
\begin{figure}[tb]
\begin{center}
\includegraphics[scale=.5]{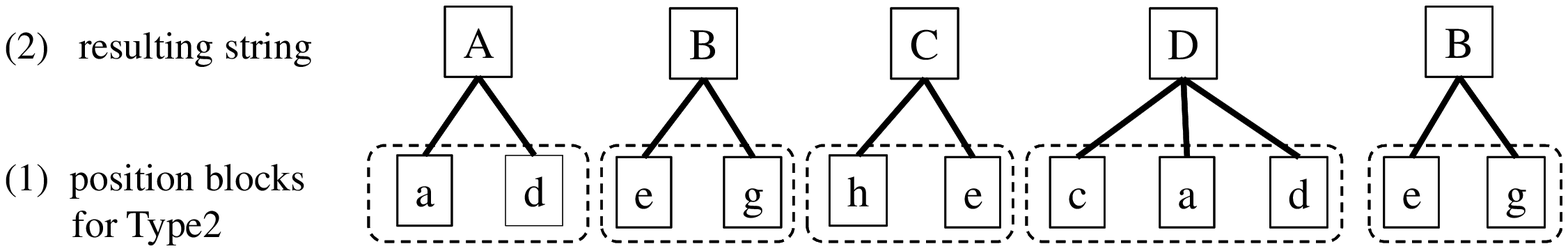}
\end{center}
\vspace{-.5cm}
\caption{
Parsing for Type2 string:
Line (1) is an original Type2 string `$\mbox{adeghecadeg}$'
with its position blocks by alphabet reduction
where its definition is omitted in this paper.
Line (2) is the resulting string $\mbox{ABCDB}$, and
the production rules $\mbox{A}\to \mbox{ad}$, $\mbox{B}\to \mbox{eg}$, etc.
}
\label{type2}
\end{figure}
}

Finally, we define ESP for any string $S\in (\Sigma\cup V)^*$ 
that is partitioned to $S_1S_2\cdots S_k$ by $k$ metablocks;
$ESP(S,D) 
= (S',D\cup D')
= (S'_1\cdots S'_k,D\cup D')$,
where $D'$ and each $S'_i$ satisfying $S'_i(D')=S_i$ are defined in the above. 

Iteration of ESP is defined by $ESP^i(S,D) = ESP^{i-1}(ESP(S,D))$.
In particular, $ESP^*(S,D)$ denotes 
the iterations of ESP until $|S|=1$.
After computing $ESP^*(S,D)$,
the final dictionary represents a rooted ordered binary tree
deriving $S$, which is denoted by $ET(S)$.

\begin{lemma}\label{Cormode1}\rm(Cormode and Muthukrishnan~\cite{Cormode07})
The height of $ET(S)$ is $O(\log |S|)$ and 
$ET(S)$ can be computed in time $O(|S|\logstar |S|)$ time.
\end{lemma}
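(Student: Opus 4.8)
The plan is to establish the two assertions separately, using as the common engine the fact that one application of $ESP$ shrinks the string length by at least a factor two. For the height of $ET(S)$, I would first check that for every string $S$ with $|S|\ge 2$ the string $S'$ with $ESP(S,D)=(S',D\cup D')$ satisfies $|S'|\le\lfloor |S|/2\rfloor$. This is verified metablock by metablock, using that every metablock has length at least two: a Type1 or Type3 metablock of length $k$ is replaced via left-aligned parsing (cases (a),(b)) by $\lfloor k/2\rfloor$ symbols, and a Type2 metablock of length $k$ is cut by alphabet reduction (case (c)) into at most $\lfloor k/2\rfloor$ pieces of size $2$ or $3$, each contributing one symbol; summing over the metablocks of $S$ gives $|S'|\le\lfloor|S|/2\rfloor<|S|$. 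Hence $ESP^*(S,D)$ performs $R\le\log_2|S|+O(1)$ rounds. Finally, each round adds only $O(1)$ levels to $ET(S)$: within a round every new rule has the form $t\to xy$ or $t\to xyz$ with $x,y,z$ symbols of the previous round, so after binarizing the ternary rules a round contributes at most two levels, and the height is $O(R)=O(\log|S|)$.

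For the running time, write $n_i$ for the length of the string at the start of round $i$, so $n_0=|S|$ and, by the above, $n_{i+1}\le\lfloor n_i/2\rfloor$; thus the $n_i$ decay geometrically and $\sum_i n_i=O(|S|)$. In round $i$ I would: (i) compute the partition of the current string into metablocks by one left-to-right scan, in $O(n_i)$ time; (ii) parse the Type1 and Type3 metablocks by left-aligned parsing in $O(n_i)$ time; and (iii) parse the Type2 metablocks by alphabet reduction. For (iii) I would invoke the property that alphabet reduction on a string of length $\ell$ over an alphabet of size $\sigma$ runs in $O(\ell\cdot\logstar\sigma)$ time: it is a sequence of linear-time passes, each shrinking the working alphabet from $m$ to $O(\log m)$, so after $O(\logstar\sigma)$ passes the alphabet --- and hence the final labeling from which the size-$2$/$3$ blocks are read --- has constant size. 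Since the symbols present in round $i$ may be taken to be integers bounded by the total number of variables created so far, which is $\sum_j n_j=O(|S|)$, we have $\sigma=O(|S|)$ and step (iii) costs $O(n_i\logstar|S|)$. To reuse already-created rules in $O(1)$ time per rule, so that $D$ remains a proper dictionary, I would keep $D$ in a hash table keyed on right-hand sides (or radix-sort the round's right-hand sides), adding only $O(n_i)$ expected time per round. Summing, round $i$ costs $O(n_i\logstar|S|)$ and the total is $\sum_i O(n_i\logstar|S|)=O\big(\logstar|S|\cdot\sum_i n_i\big)=O(|S|\logstar|S|)$.

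The one non-routine ingredient is the $O(\ell\logstar\sigma)$ bound for alphabet reduction together with the accompanying guarantee that it always yields a partition into blocks of size $2$ or $3$; since this paper deliberately omits the construction of the labeling and landmarks, I would either cite Cormode and Muthukrishnan for these facts or insert a brief self-contained account of the $\logstar$-round labeling procedure. The remaining pieces --- the per-round halving, the geometric sum, and the constant-time rule lookup --- are routine.
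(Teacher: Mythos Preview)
Your argument is correct and is essentially the standard proof of this fact. Note, however, that the paper does not give its own proof of this lemma: it is stated as a cited result of Cormode and Muthukrishnan~\cite{Cormode07}, with no proof in the text. So there is nothing to compare against beyond observing that what you wrote is exactly the kind of self-contained justification one would supply in lieu of the citation.

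One small wording quibble: from ``each metablock of length $k$ yields at most $\lfloor k/2\rfloor$ symbols'' you conclude $|S'|\le\lfloor |S|/2\rfloor$. Strictly, the sum of floors need not equal the floor of the sum; what you actually get is $|S'|\le |S|/2$, and then $|S'|\le\lfloor |S|/2\rfloor$ because $|S'|$ is an integer. This is harmless for the $O(\log|S|)$ bound. Also, you correctly flag that the $O(\ell\,\logstar\sigma)$ cost and the size-$2$/$3$ block guarantee for alphabet reduction are the only non-self-contained ingredients; citing \cite{Cormode07} for these is precisely what the paper does.
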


\begin{lemma}\label{Cormode2}\rm(Cormode and Muthukrishnan~\cite{Cormode07})
Let $S=s_1s_2\cdots s_k$ be the partition of a Type2 metablock $S$
by alphabet reduction.
For any $1\leq j\leq |S|$,
the block $s_i$ containing $S[j]$ is determined by
at most $S[j-\logstar N-5:j+5]$.
\end{lemma}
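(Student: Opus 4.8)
The plan is to unwind the definition of alphabet reduction, which the body of the paper suppressed, and to track how far the dependency of each intermediate label reaches into $S$. Recall that alphabet reduction proceeds in rounds. Writing $\lambda_0(i)=S[i]$, round $t$ produces from $\lambda_{t-1}$ a new labeling $\lambda_t$ in which $\lambda_t(i)$ is obtained by comparing the binary encodings of $\lambda_{t-1}(i)$ and $\lambda_{t-1}(i-1)$: one takes the index $\ell$ of the least significant bit in which they differ and the value $b$ of that bit in $\lambda_{t-1}(i)$, and sets $\lambda_t(i)=2\ell+b$. Two facts are standard: (i) since a Type2 metablock contains no repetition, consecutive symbols differ, and this invariant is preserved by each round, so the rule is always well defined and consecutive new labels stay distinct; (ii) if $\lambda_{t-1}$ uses an alphabet of size $m$, then $\lambda_t$ uses an alphabet of size at most $2\lceil\log m\rceil$, so after $\logstar N+O(1)$ rounds the labels lie in a fixed constant-size set, after which a constant number of cleanup rounds brings them into $\{0,1,2\}$.

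The first key step is a one-line induction: $\lambda_t(i)$ is determined by $S[i-t:i]$. The base case $t=0$ is immediate, and in the inductive step $\lambda_t(i)$ is a function of $\lambda_{t-1}(i)$ and $\lambda_{t-1}(i-1)$, which by hypothesis are determined by $S[i-t+1:i]$ and $S[i-t:i-1]$ respectively; their union is $S[i-t:i]$. Hence, after the $r=\logstar N+O(1)$ rounds, the final label $\lambda_r(i)$ is determined by $S[i-r:i]$ --- a window that reaches $r$ positions to the left of $i$ but not at all to the right.

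Next I would recall how the block boundaries, the \emph{landmarks}, are read off from the final labeling: a position is a landmark when its final label is a local maximum among its immediate neighbours, or a local minimum with no neighbouring landmark; this decision for a given position inspects the final labels only within a window of constant radius around it. Consequently, whether any fixed position is a landmark is a function of $S$ on a window of the form $[\text{that position}-r,\ \text{that position}+O(1)]$. The block $s_i$ containing $S[j]$ is delimited by the nearest landmarks on either side of $j$; since every block has length $2$ or $3$, those landmarks, and every position whose landmark status must be queried in order to locate them, lie within $O(1)$ of $j$. Composing the windows, the rightmost position ever consulted is at most $j+5$ (the few positions to the right of $j$ needed for the landmark tests), while the leftmost is the left end of the final-label window of the left-most consulted position, namely $j-r-O(1)$; with the precise constants of Cormode and Muthukrishnan this is exactly $j-\logstar N-5$. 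Therefore $s_i$ is determined by $S[j-\logstar N-5:j+5]$.

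The main obstacle is purely bookkeeping: pinning down the additive constants ($+5$ on the right, $\logstar N+5$ on the left) requires the exact landmark rule and the exact count of cleanup rounds after the alphabet becomes constant, details we have deliberately omitted; for these I would defer to~\cite{Cormode07}, where the same constants are verified. The structural content --- left-only growth of the label-dependency window by one per round, constant-radius landmark tests, and constant-size blocks --- is what the argument above establishes, and it is all that is used later in the paper.
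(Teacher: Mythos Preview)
The paper does not give its own proof of this lemma: it is stated with attribution to Cormode and Muthukrishnan~\cite{Cormode07} and used as a black box, with even the definition of alphabet reduction explicitly omitted. So there is nothing in the paper to compare your argument against.

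That said, your sketch is the standard one from~\cite{Cormode07}: the one-position-per-round leftward growth of the dependency window for $\lambda_t(i)$, the $\logstar N+O(1)$ rounds to reach a constant alphabet, the constant-radius landmark rule, and the bounded block length. You are right that the precise additive constants depend on the exact landmark rule and the number of cleanup rounds, and deferring those to the original reference is exactly what the present paper does as well.
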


We refer to another characteristic of ESP for
pattern embedding problem.
Nodes $v_1,v_2$ in $T=ET(S)$ are {\em adjacent} in this order 
if the subtrees on $v_1,v_2$ are adjacent in this order.
A string $p_1\cdots p_k$ of length $k$ is embedded in $T$
if there exist nodes $v_1,\ldots, v_k$ such that 
${\it label}(v_i)=p_i$ and any $v_i,v_{i+1}$ are adjacent in this order.
If $T[i]$, the $i$-th leaf of $T$, is the leftmost leaf of $v_1$
and $T[j]$ is the rightmost leaf of $v_k$,
we call that $p_1\cdots p_k$ is embedded as $T[i:j]$.

\begin{definition}\rm
$Q\in(\Sigma\cup V)^*$ is called an {\em evidence} of $P\in\Sigma^*$ in $S$
if the following holds:
$S[i:j]=P$ iff $Q$ is embedded as $T[i:j]$.
\end{definition}

We note that any $P$ has at least one evidence since
$P$ itself is an evidence of $P$.

\begin{lemma}\label{spire2011}\rm (Maruyama et al.~\cite{Maruyama11})
Given $T=ET(S)$, 
for any $T[i:i+t]=P$, there exists an evidence 
$Q=q_1\cdots q_k$ of $P$ with maximal repetitions $q_\ell$ and $k=O(\log t)$.
We can compute the $Q$ in $O(\log t\log|S|)$ time, 
and we can also check if $Q$ is embedded as 
$T[j:j+t]$ in $O(\log t\log |S|)$ time for any $j$.
\end{lemma}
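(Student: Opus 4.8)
The plan is to construct $Q$ by walking up $ET(S)$ from the two endpoints of the given occurrence and collecting, at each level, only a bounded ``fringe'' of nodes, using throughout the local-consistency properties of ESP (Lemmas~\ref{Cormode1} and~\ref{Cormode2}) and the fact that $\logstar N$ is treated as a constant.

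Write $S=S^{(0)},S^{(1)},\dots,S^{(H)}$ for the sequence of strings produced by iterating ESP, so that $|S^{(H)}|=1$, $H=O(\log|S|)$, and the symbols of $S^{(h)}$ correspond to a layer of nodes of $T=ET(S)$, with $S^{(0)}$ the leaves and $S^{(H)}$ the root. Fix $T[i:i+t]=P$, and for each $h$ let $B_h$ be the (contiguous) sequence of level-$h$ nodes whose leaf-spans lie inside $[i,i+t]$; thus $B_0$ is the leaf string $P$, with $|B_0|=t+1$. Two structural facts drive the argument. \emph{(1) Geometric shrinkage:} one round of ESP replaces each block of $2$ or $3$ symbols, and each two symbols of a repetition $a^+$, by a single symbol, so all but $O(\logstar N)$ nodes at the two ends of $B_h$ are merged with neighbours still inside $B_h$; hence $|B_{h+1}|\le\lceil|B_h|/2\rceil+c$ for a constant $c$, and so there is $h^\ast=O(\log t)$ with $1\le|B_{h^\ast}|=O(1)$. \emph{(2) Context-independence of the interior:} which merges occur among the interior nodes of $B_h$, and with what labels, is determined by the labels in $B_h$ alone --- via the deterministic left-aligned parsing of cases~(a),(b) for Type1/Type3 metablocks, and via Lemma~\ref{Cormode2} for Type2 metablocks, whose $O(\logstar N)$-symbol decision window, for positions far enough from the ends of $B_h$, lies entirely within $B_h$.

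Given this, I would take $Q$ to be the left-to-right concatenation of: the $O(\logstar N)$ leftmost leaves of $B_0$; for each $h=1,\dots,h^\ast-1$, the $O(\logstar N)$ leftmost nodes of $B_h$ not contained in any node of $B_{h+1}$; the whole of $B_{h^\ast}$; then, symmetrically, the matching rightmost nodes for $h=h^\ast-1,\dots,1$; and the $O(\logstar N)$ rightmost leaves of $B_0$. This is $O(\log t)$ nodes (treating $\logstar N$ as constant). Collapsing each maximal block of equal labels into a single factor $q_\ell$ --- such blocks arise only from repetition metablocks and from runs inside $B_{h^\ast}$, $O(1)$ of them per level --- yields $Q=q_1\cdots q_k$ with $k=O(\log t)$ and each $q_\ell$ a maximal repetition. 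To see $Q$ is an evidence of $P$, induct on $h$: the characters of $P$ fix the chosen boundary leaves exactly; given the level-$h$ fringe, fact~(2) (whose decision windows now lie inside already-fixed material) fixes the level-$(h+1)$ fringe and eventually $B_{h^\ast}$. Hence any occurrence $T[i':i'+t]=P$ induces the identical node string in the identical relative layout, so $Q$ embeds as $T[i':i'+t]$; conversely an embedding of $Q$ as $T[i':j']$ reads off $P$ along the leaves and forces $j'=i'+t$.

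Finally, $Q$ is computed by two upward traversals in $T$ from leaves $i$ and $i+t$, stopping at level $h^\ast=O(\log t)$; at each level we examine $O(\logstar N)$ neighbouring nodes and re-run a constant amount of alphabet reduction, and locating a required node inside $T$ costs $O(\log|S|)$, for $O(\log t\log|S|)$ overall. Checking whether $Q$ embeds as $T[j:j+t]$ is the same computation started at leaf $j$: recompute the canonical fringe-and-core cover of $T[j:j+t]$ and compare it with $Q$ symbol by symbol, aborting at the first mismatch; again $O(\log t\log|S|)$. The main obstacle is fact~(2) together with its boundary bookkeeping: one must verify that \emph{every} source of context-sensitivity near the ends of $[i,i+t]$ --- the $O(\logstar N)$ alphabet-reduction window, the rule attaching a length-$1$ Type3 piece to a neighbouring repetition, and repetitions that may extend past position $i$ or $i+t$ --- is confined to $O(\logstar N)$ nodes per level, so that the interior of $B_h$ is genuinely a function of $P$; the shrinkage bound~(1) and the $O(\log t)$ count of runs then reduce to the arithmetic of $2$-or-$3$ blocking.
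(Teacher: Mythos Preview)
The paper does not give its own proof of this lemma: it is stated as a citation of Maruyama \emph{et al.}~\cite{Maruyama11} and is used as a black box, so there is no argument in the present paper to compare your proposal against.

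That said, your sketch follows essentially the approach of the original reference and of the ESP literature more broadly: climb $ET(S)$ level by level from the two endpoints of the occurrence, keep at each level only the $O(\logstar N)$ ``fringe'' nodes whose parsing may depend on context outside $[i,i+t]$, and stop after $O(\log t)$ levels once the covered interval fits in $O(1)$ nodes. Your two structural facts --- the geometric shrinkage of $|B_h|$ and the context-independence of the interior parsing via Lemma~\ref{Cormode2} and the deterministic left-aligned rule --- are exactly the ingredients the original proof uses, and your run-length collapse of equal labels into maximal repetitions $q_\ell$ matches the statement's requirement.

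One point where you should be more careful is the converse direction of the evidence property. You argue that an embedding of $Q$ as $T[i':j']$ ``reads off $P$ along the leaves and forces $j'=i'+t$.'' This is true because the $q_\ell$ together tile the entire leaf interval without gaps (the fringe pieces at each level abut the core $B_{h^\ast}$ exactly), but you should say explicitly that the concatenation of the leaf-yields of $q_1,\dots,q_k$ equals $P$, not merely that each $q_\ell$ individually has the right label. Similarly, your handling of repetitions that straddle the boundary of $[i,i+t]$ deserves an explicit word: when a maximal run $a^+$ extends past position $i$ or $i+t$, the left-aligned parsing of that run \emph{does} depend on where the run begins in $S$, so the leftmost (resp.\ rightmost) fringe node at that level is not determined by $P$ alone; this is precisely why those $O(1)$ nodes per level are kept as raw symbols in $Q$ rather than lifted to the next level, and it is worth stating that the embedding test absorbs this ambiguity by matching those symbols directly rather than their parents.
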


\subsection{Straight Line Programs}

\full{
\begin{figure}
\centerline{\includegraphics[width=0.5\textwidth]{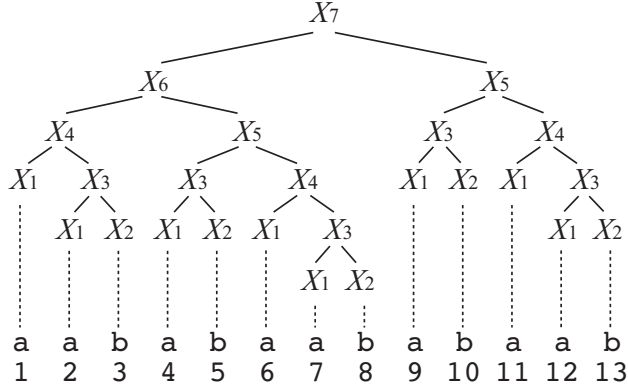}}
\caption{
  The derivation tree of
  SLP with $X_1 \rightarrow
  \mathtt{a}$, 
  $X_2 \rightarrow \mathtt{b}$, $X_3 \rightarrow X_1X_2$,
  $X_4 \rightarrow X_1X_3$, $X_5 \rightarrow X_3X_4$, $X_6 \rightarrow X_4X_5$, and $X_7 \rightarrow X_6X_5$,
  representing string $S = \derive(X_7) = \mathtt{aababaababaab}$.
}
\label{fig:SLP}
\end{figure}
}{}

A {\em straight line program} ({\em SLP})~\cite{NJC97} is a widely accepted 
abstract model of outputs of grammar-based compressed methods.
An SLP is a sequence of assignments 
$\{X_i \rightarrow expr_i\}_{i=1}^n$,
where each $X_i$ is a variable and each $expr_i$ is an expression, where
$expr_i = a$ ($a\in\Sigma$), or $expr_i = X_{\ell} X_r$~($\ell,r < i $).
Namely, SLPs are admissible grammars in the Chomsky normal form,
and hence outputs of admissible grammars can be easily converted to 
SLPs in linear time (see also Figure~\ref{fig:results_summary}).
Let $\derive(X_i)$ represent the string derived from $X_i$.
When it is not confusing, we identify a variable $X_i$
with $\derive(X_i)$.
Then, $|X_i|$ denotes the length of the string $X_i$ derives.
An SLP $\{X_i\rightarrow expr_i\}_{i=1}^n$ {\em represents} 
the string $S = \derive(X_n)$.
The \emph{size} of an SLP is the number of
assignments in it.
The {\em height} of variable $X_i$ is denoted $\height(X_i)$,
and is 1 if $X_i = a~(a\in \Sigma)$,
and $1 + \max\{ \height(X_\ell), \height(X_r)\}$ if $X_i = X_\ell X_r$.
The height of an SLP $\{X_i \rightarrow expr_i\}_{i=1}^n$ 
is defined to be $\height(X_n)$.

Note that $|X_i|$ and $\height(X_i)$ for all variables can be calculated in
a total of $O(n)$ time by simple dynamic programming iterations.
In the rest of the paper, we will therefore assume
that these values will be available.

The following results are known for SLP compressed strings:
\begin{theorem}[\cite{lifshits07:_proces_compr_texts}]
  \label{theorem:fcpm}
  Given two SLPs of total size $n$ that describe strings $S$ and $P$, respectively,
  a succinct representation of $\Occ(S, P)$ can be computed in $O(n^3)$ time
  and $O(n^2)$ space.
\end{theorem}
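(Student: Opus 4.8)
The plan is to use the standard machinery for fully compressed pattern matching, which rests on the combinatorial structure of occurrences that straddle the ``cut'' of an SLP variable. Write the text SLP as $\{X_i\}_{i=1}^{k}$ and the pattern SLP as $\{Y_j\}_{j=1}^{m}$ with $n=k+m$, and identify each variable with the string it derives; the case $|P|=1$ is immediate and treated separately, so assume $|P|\ge 2$. For $X_i=X_\ell X_r$, say an occurrence of $P$ inside $X_i$ is \emph{crossing} if it lies in neither $X_\ell$ nor $X_r$, i.e.\ it covers both positions $|X_\ell|$ and $|X_\ell|+1$. If we descend an occurrence of $P$ in $S$ through the derivation tree of the text SLP, it is contained in the left (resp.\ right) part of the current node's rule until, at a unique node, it becomes crossing; hence $\Occ(S,P)$ is the disjoint union, over all derivation-tree nodes, of the crossing occurrences inside that node, each shifted by the node's start position. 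Grouping nodes by their variable, this means $\Occ(S,P)$ is determined by the text SLP together with, for each $X_i$, the set $\Occ^{\mathrm{cross}}(X_i)$ of offsets at which $P$ occurs crossing the cut of $X_i$.

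\textbf{Succinct representation.} The point is that $\Occ^{\mathrm{cross}}(X_i)$ is either empty or a single arithmetic progression: all its members cover the two cut positions, so any two of them overlap in at least $|P|-1\ge 1$ characters, and by the Fine--Wilf periodicity lemma they form an arithmetic progression whose common difference is the smallest period of $P$. Stored as a triple (first offset, difference, count), this costs $O(1)$ words per variable, so the text SLP plus these $k=O(n)$ progressions is an $O(n)$-word representation of $\Occ(S,P)$ --- comfortably within $O(n^2)$ space --- from which occurrences can be counted, enumerated, or queried in polynomial time.

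\textbf{Computing the crossing occurrences.} A crossing occurrence of $P$ in $X_i=X_\ell X_r$ corresponds bijectively to a split $s\in\{1,\dots,|P|-1\}$ with $P[1:s]$ a suffix of $X_\ell$, $P[s+1:|P|]$ a prefix of $X_r$, and $s\le|X_\ell|$, $|P|-s\le|X_r|$; the occurrence then starts at offset $|X_\ell|-s+1$. So it suffices to compute, for every text variable $W$, a compact description of $\mathcal P(W)=\{\,s: P[1:s]\text{ is a suffix of }W\,\}$ and of the analogous set of pattern suffixes that are prefixes of $W$, and to intersect the relevant descriptions. Each such set is a union of $O(\log|P|)$ arithmetic progressions: if $s^\ast=\max\mathcal P(W)$ then for $s<s^\ast$ one has $s\in\mathcal P(W)$ iff $P[1:s]$ is a border of $P[1:s^\ast]$, and the border lengths of any string split into $O(\log)$ progressions (those exceeding half the length form one progression with difference equal to the smallest period, then recurse on the longest short border). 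These descriptions are filled in bottom-up over $W$'s SLP: for $W=W_\ell W_r$, a prefix of $P$ that is a suffix of $W$ either has length at most $|W_r|$, reducing to $W_r$, or contains $W_r$ entirely, in which case $W_r$ must occur in $P$ ending at the relevant position and the rest must lie in $\mathcal P(W_\ell)$. Thus the computation is coupled with a search for occurrences of text variables inside the pattern and with period queries on prefixes of the pattern --- all problems of the same fully compressed flavour, handled by an analogous dynamic program indexed by pairs (text variable, pattern variable). This table has $O(n^2)$ entries, giving the $O(n^2)$ space bound, and each entry is assembled from its text variable's two children by a constant number of arithmetic-progression intersections, merges, and period computations, each touching $O(n)$ progressions; summing over all entries yields $O(n^3)$ time.

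\textbf{Main obstacle.} The substance of the proof is the bookkeeping that keeps every intermediate object a short union of arithmetic progressions and that handles the case analysis for how an occurrence can straddle two nested cuts (the cut of the current text variable and, recursively, the internal cut of the current pattern variable), together with the length-fitting side conditions. In each ``combine'' step one must verify that merging periodic information across a cut produces only $O(\log)$ new progressions, which in every case reduces to applying Fine--Wilf to two overlapping periodic regions. The remaining ingredients --- the $|P|=1$ case, computing $|X_i|$ and $|Y_j|$ by dynamic programming, and collapsing the finally computed set of split points (known a priori to be a single progression) --- are routine.
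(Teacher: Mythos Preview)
The paper does not prove this theorem at all: it is quoted verbatim as a result of Lifshits~\cite{lifshits07:_proces_compr_texts} and used as a black box in the subsequent conversions. There is therefore nothing in the paper to compare your argument against.

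That said, your sketch is a faithful outline of the algorithm in the cited reference: decompose $\Occ(S,P)$ into the crossing occurrences at each text variable, observe via Fine--Wilf that each crossing set is a single arithmetic progression, and compute these progressions by a dynamic program over pairs (text variable, pattern variable) that maintains, as $O(\log|P|)$ arithmetic progressions, the sets of pattern prefixes (resp.\ suffixes) matching a suffix (resp.\ prefix) of each text variable. The $O(n^2)$ table with $O(n)$ work per cell gives the stated $O(n^3)$ time and $O(n^2)$ space. The part you label ``main obstacle'' is indeed where the work lies; your description of it is accurate but high-level, and a full proof would have to spell out the case analysis for combining progressions across the two nested cuts and verify that no step blows up the number of progressions beyond $O(\log|P|)$.
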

Since we can compute $|S|$ and $|P|$ in $O(n)$ time
and a membership query to the succinct representation can be answered in $O(n)$ time~\cite{MasamichiCPM97},
the equality checking of whether $S = P$ can be done in a total of $O(n^3)$ time.

\begin{lemma}[\cite{lifshits:DSP:2006:798}]
  \label{lemma:slpsubstring}
  Given an SLP of size $n$ representing string $S$,
  an SLP of size $O(n)$ which represents an arbitrary substring $S[i:j]$ 
  can be constructed in $O(n)$ time.
\end{lemma}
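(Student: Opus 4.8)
The plan is to descend the derivation DAG of the SLP just once to locate where the positions $i$ and $j$ ``split apart'', and then to build two short chains of fresh variables that carve out a suffix of one existing variable and a prefix of another, gluing them with one more variable. First I would dispose of the trivial case $i=j$: then $S[i:j]$ is a single symbol, namely the label of the leaf reached by following from $X_n$ the unique root-to-leaf path selected by position $i$ (at each rule $X\to X_\ell X_r$ go left or right according to whether the remaining offset is $\le |X_\ell|$), which costs $O(\height(X_n))=O(n)$ time. For $i<j$, I would walk down from $X_n$ maintaining the current variable $X$ and the offsets of $i$ and $j$ inside $\derive(X)$; at a rule $X\to X_\ell X_r$, if both offsets lie in $X_\ell$ recurse into $X_\ell$, if both lie in $X_r$ recurse into $X_r$ (subtracting $|X_\ell|$ from both offsets), and otherwise stop. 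This yields a variable $X_k\to X_\ell X_r$ and offsets $i'$ into $X_\ell$, $j'$ into $X_r$ with $S[i:j]=\derive(X_\ell[i':|X_\ell|])\,\derive(X_r[1:j'])$; the descent visits at most $\height(X_n)\le n$ nodes.

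Next I would introduce two auxiliary routines. Given a variable $X$ and a position $p$, $\mathit{Suffix}(X,p)$ returns the top variable of an SLP deriving $X[p:|X|]$: if $p=1$ it returns $X$ itself (no new variable); if $X\to X_\ell X_r$ and $p\le|X_\ell|$ it computes $Y=\mathit{Suffix}(X_\ell,p)$ and adds a fresh rule $Z\to Y X_r$; and if $p>|X_\ell|$ it returns $\mathit{Suffix}(X_r,\,p-|X_\ell|)$. Symmetrically $\mathit{Prefix}(X,p)$ derives $X[1:p]$, adding a fresh rule $Z\to X_\ell Y$ in the branch where the cut lies inside $X_r$, and returning $X$ itself when $p=|X|$. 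Each call follows a single downward path, so it terminates in $O(\height(X))=O(n)$ time and creates $O(n)$ new variables, each reusing an existing variable of the input as one of its two children. Finally I set $Y_1=\mathit{Suffix}(X_\ell,i')$, $Y_2=\mathit{Prefix}(X_r,j')$ and add a top rule $W\to Y_1Y_2$, obtaining an SLP for $S[i:j]$.

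For correctness I would argue by induction on the recursion depth that $\mathit{Suffix}$ and $\mathit{Prefix}$ return variables deriving exactly the claimed substrings, and then combine this with the identity for the split point established during the descent. For the complexity, the descent and each of the two auxiliary calls traverse root-to-leaf paths of length at most $\height(X_n)\le n$, contributing $O(1)$ new variables per level; hence the resulting SLP has size $n+O(n)=O(n)$ and is produced in $O(n)$ time, using only the standing assumption that $|X_i|$ is precomputed for every variable. (If desired, a final linear-time pass can discard variables no longer reachable from $W$, without affecting the bound.)

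The only delicate point — and it is bookkeeping rather than a genuine obstacle — is index management: correctly shifting both offsets when recursing into $X_r$, handling the boundary cases $p=1$ in $\mathit{Suffix}$ and $p=|X|$ in $\mathit{Prefix}$ so that no redundant variables are created, and ensuring that when $i$ and $j$ fall in the same leaf the single-symbol case is returned. Note that the argument nowhere uses balancedness of the SLP; it relies solely on $\height(X_n)\le n$.
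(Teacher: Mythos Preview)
The paper does not give its own proof of this lemma; it is stated with a citation to Lifshits and used as a black box. There is therefore nothing in the paper to compare your argument against.

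That said, your proof is correct and is essentially the standard construction found in the literature: descend from the root until the interval $[i:j]$ straddles the two children, then peel off a suffix of the left child and a prefix of the right child along two root-to-leaf paths, introducing at most one fresh variable per level. Your use of the bound $\height(X_n)\le n$ is what makes the $O(n)$ size and time bounds go through without any balance assumption, and your handling of the base cases ($p=1$ in $\mathit{Suffix}$, $p=|X|$ in $\mathit{Prefix}$) is exactly what is needed to avoid creating spurious unary rules.
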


\section{Algorithms for Restructuring Compressed Texts}

In this section we present our polynomial-time algorithms 
that converts an input compressed representation to another compressed representation.
In the sequel, $n$ and $m$ will denote the sizes of 
the input and output compressed representations, respectively.

\subsection{Conversions from Run Length Encoding}
For conversions from Run Length Encodings,
we obtain the results below.
\begin{theorem}[Run Length Encoding to Re-pair]
  Given an RL factorization of size $n$ that represents string $S$,
  the grammar of size $m$ produced by applying Re-pair algorithm to $S$
  can be computed in $O(nm \log m)$ time.
\end{theorem}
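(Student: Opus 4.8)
The plan is to run the Re-pair algorithm \emph{symbolically} on the run-length encoding, never materialising the intermediate strings $w_i$ --- this is essential, since already $w_1 = S$ may be exponentially longer than its RLE. I would keep $w_i$ as a doubly linked list of runs $(c_1,e_1),\dots,(c_{b_i},e_{b_i})$ with $c_j\neq c_{j+1}$, initialised to the given RLE of $S$ (so $b_1=n$), together with a max-oriented priority queue over the distinct bigrams of $w_i$, each keyed by its number of non-overlapping occurrences and carrying the list of runs / run-boundaries that witness it. The point is that in an RLE the occurrences of a bigram are of exactly two clean types: an intra-run bigram $cc$ occurs $e_j-1$ times inside each run $c^{e_j}$, and an inter-run bigram $cd$ ($c\neq d$) occurs once at each boundary where a $c$-run is immediately followed by a $d$-run. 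Consequently there are only $O(b_i)$ distinct bigrams (at most one $cc$ per run-symbol and at most one per boundary, mirroring the ``$\le 2n-1$ bigrams'' observation for $S$), and the initial priority queue, with all these counts, is built from the RLE of $S$ in $O(n\log n)$ time.

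A single Re-pair step is then: extract the bigram $\gamma_i$ of maximum count; if the count is $\le 1$, stop and emit the final rule $X_{r+1}\to w_{r+1}$. Otherwise output $X_i\to\gamma_i$ and rewrite every non-overlapping occurrence of $\gamma_i$ in the run list. If $\gamma_i=cc$, each run $c^{e}$ becomes $X_i^{\lfloor e/2\rfloor}c^{\,e\bmod 2}$ (the left-aligned parsing); if $\gamma_i=cd$ with $c\neq d$, each boundary $c^{e}\mid d^{e'}$ becomes $c^{e-1}\,X_i\,d^{e'-1}$, after which empty runs are removed and any two runs that become adjacent with equal symbol are merged. It is routine to check that this is exactly the string obtained by replacing every non-overlapping occurrence of $\gamma_i$. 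Each affected run or boundary causes $O(1)$ edits to the list and $O(1)$ count updates --- for $\gamma_i$ itself and for the constantly many bigrams incident to the edited region --- each an $O(\log b_i)$ priority-queue operation; so the step costs $O(b_i\log b_i)$. Since every step contributes the length-$2$ rule $X_i\to\gamma_i$ to the size-$m$ output grammar, the number $r$ of steps is $O(m)$.

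The technical heart is to bound $b_i$. The crude estimate only gives $b_{i+1}\le 2b_i$ --- replacing a bigram with $f$ occurrences creates at most $f\le b_i$ fresh runs --- which is useless over $\Omega(m)$ steps. What I would establish is $b_i = O(n)$ throughout: one charges each expansion of the run structure against the simultaneous deletion of that many symbols of $w_i$ at the very same positions, and then argues, using that Re-pair always contracts a \emph{most frequent} bigram, that such local expansions are absorbed by later steps instead of accumulating, so the number of runs never exceeds a constant multiple of $n$. I expect this invariant to be the only part requiring genuine work; given it, the total time is $\sum_{i=1}^{r}O(b_i\log b_i)=O(m)\cdot O(n\log m)=O(nm\log m)$, while the correctness of the symbolic rewriting and of the bookkeeping is routine.
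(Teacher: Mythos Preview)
Your high-level plan---simulate Re-pair symbolically on a run-length representation of the intermediate strings $w_i$, using a heap over bigram frequencies---is exactly the paper's approach. The gap is in the ``technical heart'': the invariant $b_i=O(n)$ you intend to prove is \emph{false}. Take $S=(a^{2^k-1}b)^t$, whose RLE has size $n=2t$. Re-pair first replaces $aa$, then $AA$, and so on, because the intra-run bigram is always the most frequent; after $k-1$ such steps one has
\[
w_k \;=\; \bigl(X_{k-1}\,X_{k-2}\,\cdots\,X_1\,a\,b\bigr)^t,
\]
which has $(k{+}1)t=\Theta(n\log(N/n))$ runs. So $b_i$ can blow up by a $\log N$ factor, and your charging sketch (``expansions are absorbed by later steps'') does not hold: in this example the run count climbs monotonically for $k-1$ steps before it starts to fall.

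What does hold---and what the paper proves---is the weaker statement that the number of \emph{affected} runs or boundaries in each step is $O(n)$, even though the total number of runs is not. The key structural fact is that, because replacements are done left-first, the portion of $w_i$ lying inside a single original RL factor $a^p$ always has the shape $Y_{k(1)}^{\,q}\,Y_{k(2)}\cdots Y_{k(l)}$ with strictly decreasing indices $k(1)>k(2)>\cdots>k(l)$. From this one reads off: any bigram $Y_\ell Y_r$ with $\ell\ne r$ and $\derive(Y_\ell Y_r)\in a^+$ occurs at most once per original factor; a bigram $Y_\ell Y_\ell$ corresponds to at most one run of $Y_\ell$ per original factor and is rewritten there in $O(1)$ via exponent arithmetic; and a bigram whose expansion contains two distinct letters must straddle an original factor boundary, of which there are $n-1$. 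Each case gives $O(n)$ touched positions per step, hence $O(n\log m)$ per step and $O(nm\log m)$ total. Replace your $b_i=O(n)$ claim by this argument and the rest of your bookkeeping goes through.
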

\begin{proof}
\full{
We consider a simple simulation of the Re-pair algorithm
that works on the RL factorization of the string $S$.
We shall assume that the Re-pair algorithm replaces non-overlapping bigrams 
with a new variable in a left-first manner.
Let $Y_i \rightarrow Y_{\ell}Y_{r}$ denote the $i$-th rule
produced by the Re-pair algorithm running on $S$.
Let $S_1 = S$, and for $i \geq 1$ let $S_{i+1}$ denote the string obtained by 
replacing frequent bigrams by $Y_1$, $Y_2$, \ldots, and $Y_{i}$.
Note that the bigram $Y_{\ell}Y_{r}$ will not occur in $S_{i+1}$.
Consider the RL factorization of $S_{i}$,
and let $w_i$ denote the string obtained by concatenating 
the RL factors of $S_i$ 
consisting of characters in $\Sigma \cup \{Y_j\}_{j=1}^{i-1}$.

We find the most frequent bigram $Y_\ell Y_r$ in $w_i$,
and then replace non-overlapping occurrence of $Y_\ell Y_r$ in $w_i$
with a new variable $Y_i$ on the left priority basis,
and then compute $w_{i+1}$.

Let $a,b,c \in \Sigma$ with $a \neq b$ and $a \neq c$,
and let $ba^pc$ be a substring of the original string $S$,
where $p \geq 1$.
Consider any occurrence of $ba^pc$ that begins at position $v$ in $S$,
namely, let $S[v:v+p+1] = ba^pc$.
There are two cases to consider:
(1) the range $[v:v+p+1]$ is fully contained within a variable $Y_k$ in $w_i$;
(2) the range $[v:v+p+1]$ is contained in a substring of $w_i$ of form
$(Y_{s})^{e} (Y_{k(1)})^{q} Y_{k(2)} \cdots Y_{k(l)} (Y_{r})^{t}$
with $k(1) > k(2)> \cdots > k(l)$,
where $\derive(Y_{k(1)})^{q} \cdot \derive(Y_{k(2)}) \cdots \derive(Y_{k(l)}) = a^{p^\prime}$
for some $p^\prime \leq p$,
$ba^x$ is a suffix of $\derive(Y_{s})$,
$a^yc$ is a prefix of $\derive(Y_{r})$,
and $x + p^\prime + y = p$.

Let $Y_\ell Y_r$ be the most frequent bigram in $w_i$.
It is possible to replace 
non-overlapping occurrences of $Y_\ell Y_r$ in $O(n)$ time, as follows:
We can see that $\derive(Y_\ell)\derive(Y_r)$ occurs 
either (A) in a sequence $f_j f_{j+1} \cdots f_{j+d-1}$ 
of $d \geq 2$ consecutive factors in $w_1$
or (B) entirely within a single factor $f_{j}$ of $w_1$.
This is because, if $\derive(Y_\ell)\derive(Y_r)$ contains 
at least two distinct characters $a \neq b$, then 
it occurs in a sequence of $d$ factors,
and if $\derive(Y_\ell)\derive(Y_r) = a^z$,
then it is fully contained in a factor.
Consider case (A):
Let $\derive(Y_\ell)[1] = c$.
Since the number of factors of form $f = c^p$ does not exceed $n$,
the number of occurrences of the bigram of case (A) is $O(n)$.
Now consider case (B):
According to the observation (2) above,
any bigram $Y_\ell Y_r$ with $\derive(Y_\ell)\derive(Y_r) = a^z$ and $\ell \neq r$
occurs at most once in each substring of $w_i$ that corresponds to a factor $f_j$.
Hence the number of occurrences of such a bigram in $w_i$ is at most $n$.
If $\ell = r$,
then the bigram $Y_\ell Y_\ell$ can occur $q-1$ times at each factor.
We then replace $(Y_\ell)^q$ with $(Y_i)^{q/2}$ if $q$ is even,
and with $(Y_i)^{(q-1)/2}Y_\ell$ otherwise, in $O(1)$ time.

Since each $w_i$ consists of characters in $\Sigma \cup \{Y_j\}_{j=1}^{i-1}$,
the number of all bigrams in $w_i$ is $O(m^2 + m|\Sigma|) = O(m^2)$.
We find the most frequent bigram in $O(\log m)$ time using a heap,
and the total time complexity for converting 
the RL factorization to the grammar corresponding to Re-pair
is $O(nm \log m)$.
\qed}
{Proof in full version (in Appendix)}
\end{proof}

\begin{theorem}[Run Length Encoding to LZ77/LZ88]
  Given an RL factorization of size $n$ that represents string $S$,
  the LZ factorization of $S$ can be computed in $O(nm + n\log n)$ time,
  where $m$ is the size of LZ factorization.
\end{theorem}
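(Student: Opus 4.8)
The plan is to simulate the greedy LZ77 factorization directly on the RL factorization of $S$, never materializing $S$ itself. Write the input as runs $a_1^{p_1},\dots,a_n^{p_n}$, and precompute by prefix sums (in $O(n)$ time) the start and end positions $s_j,e_j$ in $S$ of each run $j$. We generate the factors left to right, maintaining the start of the current factor as a pair (run index $i$, offset $o$) with $1\le o\le p_i$, so that the unfactored suffix begins with $a_i^{\,p_i-o+1}a_{i+1}^{\,p_{i+1}}\cdots$, and letting $p=s_i+o-1$ be the actual text position. For each factor we must find the longest prefix $g$ of this suffix occurring earlier in $S$, i.e. inside $S[1:p-1]$ for the non-self-referential variant, or inside $S[1:p-1+|g|-1]$ for the self-referential variant.

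The key observation is a dichotomy: either (a) $g=a_i^{t}$ stays inside the current run ($t\le p_i-o+1$), or (b) $g$ contains the whole remainder $a_i^{\,p_i-o+1}$ of run $i$ followed by a nonempty part of run $i+1$. Case (a) is easy: the best $t$ is $\min(p_i-o+1,\,M_i)$, where $M_i$ is the length of the longest $a_i$-run lying entirely inside $S[1:p-1]$; $M_i$ is read off a table, indexed by symbol, of the longest run of each symbol seen so far (updated in $O(1)$ whenever a run is fully consumed), combined with the $o-1$ trailing copies of $a_i$ already present. For case (b) one proves the structural lemma that, because $a_i^{\,p_i-o+1}$ is immediately followed by $a_{i+1}\neq a_i$, in any earlier occurrence this block must coincide with a \emph{suffix} of some earlier run $i'<i$ with $a_{i'}=a_i$ and $p_{i'}\ge p_i-o+1$; from there the occurrence is forced to match run $i'+1$ against run $i+1$, run $i'+2$ against run $i+2$, and so on. So for each of the at most $n$ earlier runs $i'$ with symbol $a_i$ and sufficient length, the best extension through $i'$ is governed by the longest common extension of the run sequences starting at $i'+1$ and $i+1$ — where two runs "agree" iff they have the same symbol \emph{and} the same length — plus an $O(1)$ adjustment for the last, possibly partial, matched run (if the symbols agree but the lengths differ, append the minimum of the two lengths). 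For the non-self-referential variant we additionally cap the match length through $i'$ at $e_i-e_{i'}$, which is exactly the requirement that the occurrence, starting at $q=e_{i'}-p_i+o$, end before position $p$; for the self-referential variant one checks $q\le p-1$ holds automatically, so no cap is needed. The factor length is the maximum of the case-(a) value and the capped case-(b) values over all candidates $i'$ (and at least $1$, a fresh literal, if nothing matches), and we record the corresponding source position $q$ for output.

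For the time bound, the longest-common-extension queries on the run sequence are answered in $O(1)$ after building, in $O(n\log n)$ time, the $\SA$, $\LCP$ array and an $\RMQ$ structure of the length-$n$ string whose letters are the pairs $(a_j,p_j)$ (the $n\log n$ coming from sorting the pairs to an integer alphabet); the remaining preprocessing, including the per-symbol run-length table, is $O(n)$ up to a $\log n$ factor for dictionary operations. Processing one factor then costs $O(n)$: scanning the candidate runs $i'$ with symbol $a_i$, with $O(1)$ work each, plus $O(1)$ for case (a) and for emitting the factor; advancing the current position and updating the run-length table is $O(1)$ amortized per run, hence $O(n)$ in total. Over the $m$ factors this is $O(nm)$, giving $O(nm+n\log n)$ overall. (The LZ78 factorization is produced by an analogous run-based simulation of the greedy parse.)

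The main obstacle is the combinatorial case analysis that keeps the per-factor cost at $O(n)$: establishing the structural lemma that every earlier occurrence of a multi-run factor is pinned to the suffix of an earlier run of the same symbol — so that enumerating such runs is exhaustive — and then threading the difference between the self-referential and non-self-referential definitions through the length computation, in particular getting the cap $e_i-e_{i'}$ right so that the reported occurrence is legal while the match remains maximal.
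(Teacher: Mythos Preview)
Your proposal is correct and follows essentially the same approach as the paper: build the suffix array, $\LCP$ array, and an $\RMQ$ structure on the run sequence (each run $a_j^{p_j}$ treated as a single symbol), then for each LZ factor scan the $O(n)$ candidate earlier runs, answering each with an $O(1)$ longest-common-extension query plus $O(1)$ boundary adjustments for the partial first and last runs. Your explicit single-run/multi-run case split and the structural lemma pinning multi-run occurrences to run boundaries spell out precisely what the paper's formula $\max_j\{p_{u+1}+\cdots+p_{u+l_j}+P+Q\}$ encodes, and your handling of the self-referential versus non-self-referential cap is more careful, but the algorithm and its analysis are the same.
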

\begin{proof}
\full{
Let $a_1^{p_1},\ldots,a_n^{p_n}$ be the RL factorization of a text $S$.
Assume that we have already computed the first $i-1$ 
LZ77 factors, $f_1,\ldots,f_{i-1}$, of $S$.
Let the pair of integers $(u,q)$ satisfy
$q + p_1 + \cdots + p_{u-1} = |f_1\cdots f_{i-1}|$,
where $1 \leq u \leq n$ and $1\leq q\leq p_u$.
For a new factor $f_i$,
compute the lengths $l_j$ of the longest common prefix 
of 
$a_{u+1}^{p_{u+1}}\cdots a_n^{p_n}$ and
each suffix 
$a_j^{p_j}\cdots a_n^{p_n}~(1\leq j\leq u)$
of the RLE, where each RL factor $a^p$
is regarded as single symbol.
The length of the $i$-th LZ77 factor is then:
$\max_j\{ p_{u+1} + \cdots + p_{u+l_j} + P + Q \}$,
where
$P = 0$ if $a_u \neq a_j$ and $P = \min\{p_u -q,p_{j-1}\}$ otherwise,
and
$Q = 0$ if $a_{u+l_j+1} \neq a_{j+l_j}$ and $Q = \min\{p_{u+l_j+1},p_{j+l_j}\}$ otherwise.
The process is then repeated to obtain $f_{i+1}$
from the pair of integers $(u+l_j+1, p_{u+l_j+1} - Q)$.
A na\"ive algorithm for obtaining each $l_j$ costs $O(n)$ time,
and therefore results in an $O(n^2m)$ time algorithm to check each of
the $O(n)$ suffixes to construct the $O(m)$ factors.
If we construct a suffix and lcp array on the RLE string beforehand,
$l_j$ can be computed in
$O(1)$ time, since it amounts to a range minimum query on the lcp array.
Note that the sum $p_{u+1} + \cdots + p_{u+l_j}$ can also be obtained in
constant time with $O(n)$ preprocessing, by constructing
an array $\mathit{sum}[i] = p_1 + \cdots + p_i$ and computing
$\mathit{sum}[u+l_j] - \mathit{sum}[u]$.
Therefore, conversion can be done in $O(nm)$ time provided that the
suffix array and lcp arrays are constructed.
The construction of the arrays require $O(n\log n)$ time,
to sort and number each of character of the alphabet $a_i^{p_i}$.

LZ78 factorization can be achieved by a simple modification.
\qed}
{Proof in full version (in Appendix)}
\end{proof}

\begin{theorem}[Run Length Encoding to Bisection]
  Given an RL factorization of size $n$ that represents string $S$,
  the grammar of size $m$ produced by applying Bisection algorithm to $S$
  can be computed in $O(m^2 + (m+n)\log n)$ time.
\end{theorem}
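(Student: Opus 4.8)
The plan is to simulate the Bisection algorithm by a memoized top-down traversal of the Bisection tree of $S$, arranged so that only $O(m)$ ``essential'' recursive calls are made and so that every substring-equality test the algorithm performs is answered directly on the run-length encoding, in time depending only on $n$. The starting observation is that the shape of the Bisection tree is determined solely by $N=p_1+\cdots+p_n$; each of its nodes is an interval $[l:r]$ of $S$, and two nodes receive the same variable precisely when $S[l:r]=S[l':r']$. Hence the grammar has exactly $m$ variables, one per distinct ``node string''; all but $O(\min(|\Sigma|,n))$ of them have length at least $2$ and a rule $X\to YZ$, where $Y,Z$ are the variables of the two Bisection children (the split being at the largest power of two strictly below $r-l+1$). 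So the task reduces to enumerating the distinct node strings and emitting their rules.

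I would traverse the tree starting from $[1:N]$, maintaining a dictionary of the variables created so far. The crucial design choice is to memoize on the node \emph{string}, not on the interval: for $S=a^N$ an interval-keyed recursion would touch $\Theta(N)$ nodes, whereas a string-keyed one touches only the $O(\log N)$ distinct powers $a^{2^t}$ together with the $O(\log N)$ ``right-spine'' remainders. Concretely, before descending into an interval $[l:r]$ the recursion asks whether $S[l:r]$ equals some already-known variable of length $r-l+1$; if so it reuses that variable and stops, otherwise it allocates a fresh variable, descends into the two child intervals, and records the rule. The strings actually descended into are exactly the length-$\ge 2$ variables, so there are $O(m)$ descents and $O(m)$ equality queries overall, independent of $N$.

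To answer the equality queries I would first build, in $O(n\log n)$ time, the suffix array and lcp array of the meta-string $a_1^{p_1}\cdots a_n^{p_n}$ (treating each run as a single symbol, after sorting and numbering the $n$ runs by symbol and then exponent), a constant-time range-minimum structure over that lcp array, and the prefix sums $\mathit{sum}[i]=p_1+\cdots+p_i$. With these, a longest-common-extension query on $S$ can be answered in $O(\log n)$ time: binary search $\mathit{sum}$ to locate the two runs and offsets, compare head symbols, and, when the current runs end together, add a range-minimum query over the meta-lcp array plus the same partial-run corrections $P,Q$ used in the RLE-to-LZ77 proof. Two equal-length substrings are equal iff this extension is long enough, and a lexicographic comparison of two length-$\ell$ substrings costs one such query plus one character lookup, i.e.\ $O(\log n)$. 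Keeping, for each node length, the known variables in a balanced search tree ordered lexicographically by their strings makes each membership query and each insertion cost $O(\log m\cdot\log n)$.

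Assembling the costs: $O(n\log n)$ preprocessing, plus $O(m)$ descents each doing $O(\log n)$ arithmetic/locating work and $O(1)$ dictionary operations at $O(\log m\log n)$ each, giving $O(n\log n+m\log m\log n)$; distinguishing the cases $n\ge m\log m$ (where $m\log m\log n\le n\log n$) and $n<m\log m$ (where $\log n=O(\log m)$, so $m\log m\log n=O(m\log^2 m)=O(m^2)$) shows this is $O(m^2+(m+n)\log n)$. The main obstacle is precisely the tension that motivates the whole paper: the Bisection tree may have $\Theta(N)$ nodes, so one must argue that string-level memoization collapses it to $O(m)$ essential nodes, while simultaneously showing that the ``reuse the same variable'' rule — which hides arbitrary substring-equality tests — can be executed on the compressed input without ever materializing a node. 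A secondary technical point, needed throughout, is the careful treatment of Bisection nodes whose endpoints fall inside a run, both in the extension queries and when reading a single character $S[l]$.
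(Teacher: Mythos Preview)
Your proposal is correct and follows essentially the same approach as the paper: a top-down simulation of Bisection with memoization on the node \emph{string} (not the interval), using the suffix and lcp arrays of the run-length meta-string together with range-minimum queries to reduce substring equality tests to $O(1)$ (after an $O(\log n)$ run-location step), and arguing that only $O(m)$ recursive descents occur. The one difference is that the paper uses a linear scan over the $O(m)$ existing variables for each membership test (yielding the $m^2$ term directly), whereas you organize them in a balanced search tree and obtain the sharper intermediate bound $O(n\log n + m\log m\log n)$ before relaxing it to the stated $O(m^2+(m+n)\log n)$; this is a mild refinement within the same framework rather than a different route.
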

\begin{proof}
\full{
Consider the following
top-down algorithm which closely follows the description of Bisection in
Section~\ref{subsubsection:bisection}.
Assume we want to construct the children $Y_\ell,Y_r$ of variable
$Y_s$ representing $S[i:j]$, to produce the grammar rule $Y_s
\rightarrow Y_\ell Y_r$.
Note that an arbitrary substring of $S[i:j]$ 
which is contained in the RLE $a_{k-1}^{p_{k-1}}\cdots a_{k+l}^{p_{k+l}}$
can be represented as a 
4-tuple $(x,k,l,y)$, where
$i = p_1 + \cdots + p_{k-1} - x + 1$,
$j = p_1 + \cdots + p_{k+l-1} + y - 1$,
$0 \leq x < p_{k-1}$, and $0\leq y < p_{k+l}$.
Let $k$ represent the largest integer where $2^k < j-i+1$.
For the substring $S[i:j]$ under consideration,
the 4-tuple for substrings $S[i:i+2^k-1]$ and $S[i+2^k:j]$ 
can be obtained in $O(\log n)$ time.
Note that equality checks between substrings represented as 4-tuples
can be conducted in $O(1)$ time with $O(n\log n)$ preprocessing,
using range minimum queries on the lcp arrays,
similar to the technique used in the conversion to LZ encodings.
Equality checks are conducted against the
$O(m)$ variables that will be contained in the output.
If there exist variables which derive
the same string, the existing variables are used in place of $Y_\ell$
and/or $Y_r$, and $Y_\ell$ and/or $Y_r$ will not be contained in the
output.
Since equality checks are conducted only for the children of variables
which are contained in the output, they are conducted only $O(m)$ times.
Therefore, conversion can be done in
$O(n\log n + m (m + \log n)) = O(m^2 + (m+n)\log n)$ time.
\qed}
{Proof in full version (in Appendix)}
\end{proof}

\subsection{Conversions from arbitrary SLP}

\begin{theorem}[SLP to Run Length Encoding]
  Given an SLP of size $n$ that represents string $S$,
  the RL factorization of $S$ can be computed in $O(n+m)$ time and $O(n)$ space,
  where $m$ is the size of the RL factorization.
\end{theorem}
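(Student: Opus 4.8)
The plan is a bottom-up preprocessing pass over the SLP followed by a streaming left-to-right traversal of its derivation tree that never materializes the (possibly exponentially long) string and never stores the whole output.

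First I would compute, in one $O(n)$-time bottom-up sweep, for every variable $X_i$: its length $|X_i|$ (assumed available), the first and last characters of $\derive(X_i)$, the lengths of the maximal runs at the left and right ends of $\derive(X_i)$, and a flag $\mathit{mono}(X_i)$ that is true iff $\derive(X_i)$ is a single run. For $X_i\to a$ these are immediate; for $X_i\to X_\ell X_r$ they combine by the obvious rules (e.g.\ $X_i$ is mono iff both children are mono over the same symbol; the leading-run length of $X_i$ is that of $X_\ell$ unless $X_\ell$ is mono with $\first(X_\ell)=\first(X_r)$, in which case it is $|X_\ell|$ plus the leading-run length of $X_r$). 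An easy induction shows that $\derive(X_i)$ is a single-symbol string iff $X_i$ is monochromatic, so $\mathit{mono}$ exactly marks the variables with no run boundary strictly inside. This step is $O(n)$ time and space.

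Next I would produce the RL factorization by a DFS of the derivation tree of $X_n$ that is \emph{pruned at monochromatic nodes}, maintaining one pending factor $(c,\ell)$: on reaching $X$, if $\mathit{mono}(X)$ then coalesce the block $\first(X)^{|X|}$ into the pending factor (extend $\ell$ if $c=\first(X)$, otherwise output $(c,\ell)$ and restart the pending factor) and do not descend; if $X=X_\ell X_r$ is not monochromatic, recurse into $X_\ell$ then $X_r$; flush the pending factor at the end. Correctness is immediate: the monochromatic nodes where the traversal stops cut $S$ into single-symbol blocks in text order, and coalescing neighbouring blocks with equal symbols is precisely the RL factorization. Since we only append to the output and keep a single pending factor together with the DFS stack (whose depth is at most the number of variables), the working space is $O(n)$.

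The remaining point, which I expect to be the main obstacle, is to show that this pruned traversal visits only $O(n+m)$ nodes. One half is clean: writing $\mathrm{irc}(X)$ for the number of run boundaries strictly inside $\derive(X)$, the identity $\mathrm{irc}(X)=\mathrm{irc}(X_\ell)+\mathrm{irc}(X_r)+[\last(X_\ell)\neq\first(X_r)]$ applied at the root shows that exactly $m-1$ internal nodes of the pruned tree are ``boundary nodes'' ($\last(X_\ell)\neq\first(X_r)$), their split positions being in bijection with the $m-1$ run boundaries of $S$. The delicate part is bounding the number of ``merge nodes'' ($\last(X_\ell)=\first(X_r)$) — equivalently, the number of maximal monochromatic subtrees of the derivation tree — by $O(n+m)$. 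I would attack this with a charging argument over the SLP viewed as a DAG: distinct occurrences of a variable sit at distinct text positions, each occurrence of a non-monochromatic variable $X$ carries a run boundary of $S$ at a fixed internal offset, and two nested occurrences cannot share a split position (the inner one would have to lie inside a single child of the outer one); combining this with the $\le n$ variables and the $\le m-1$ run boundaries of $S$ should pin the total node count at $O(n+m)$, giving an overall $O(n+m)$-time, $O(n)$-space algorithm.
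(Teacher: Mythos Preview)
Your preprocessing is fine and your correctness argument is fine; the gap is exactly where you flagged it, and it is fatal as stated. The pruned DFS does \emph{not} visit only $O(n+m)$ nodes, and the charging sketch cannot be completed because the ``merge nodes'' are genuinely unbounded in terms of $n+m$.

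Concretely, take
\[
A\to a,\quad B\to b,\quad P\to AB,\quad Q\to PA,\quad R_1\to AQ,\quad R_i\to A\,R_{i-1}\ (2\le i\le k),\quad S_1\to R_kR_k,\quad S_i\to S_{i-1}S_{i-1}\ (2\le i\le j).
\]
Then $\derive(R_k)=a^{k+1}ba$, and $\derive(S_j)$ has $m=2^{j+1}+1$ runs while the SLP has $n=k+j+4$ variables. Every $R_i$, $Q$, $P$ is non\nobreakdash-monochromatic, so in the derivation tree of $S_j$ each of the $2^j$ occurrences of $R_k$ sits atop a chain of $k+2$ non\nobreakdash-mono nodes. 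Your pruned DFS therefore visits $\Theta(2^j k)$ nodes; choosing $k=2^j$ gives $\Theta\big((n+m)^2\big)$ visited nodes with $n+m=\Theta(2^j)$. The flaw in the charging sketch is the slide from ``each non\nobreakdash-mono occurrence contains a run boundary of $S$'' to a bijection with run boundaries: along the chain $R_k,R_{k-1},\ldots,Q,P$ all $k+2$ nested non\nobreakdash-mono nodes contain the \emph{same} single run boundary of $S$, so nothing limits their count.

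The paper avoids this by \emph{not} walking such chains. In an $O(n)$ preprocessing pass it computes, for each $X_i\to X_\ell X_r$, pointers $\llink(X_i)$ and $\rlink(X_i)$ to the shallowest proper descendants on the leftmost (resp.\ rightmost) path whose trailing (resp.\ leading) run is confined to the corresponding child; the recursion then jumps via these pointers and, at each visited node, outputs the one or two RL factors straddling the $X_\ell/X_r$ boundary. Each recursive step emits at least one RL factor in $O(1)$ time, so the traversal costs $O(m)$ on top of the $O(n)$ preprocessing. If you want to salvage your DFS, you need an analogous shortcut: precompute, for every non\nobreakdash-mono variable, a pointer that skips past maximal ``merge chains'' so that every node you actually visit is either a distinct SLP variable seen for the first time along such a chain, or a boundary node (of which there are exactly $m-1$).
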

\begin{proof}
For each variable $X_i$, we first compute 
the maximal length of the run of identical characters 
which is a prefix (resp. suffix) of $X_i$,
denoted by $\prelen(X_i)$ (resp. $\suflen(X_i)$).
This can be computed in $O(n)$ time by a simple dynamic programming:
for $X_i \rightarrow X_\ell X_r$, we have $\prelen(X_i) =
\prelen(X_\ell)$ if $\prelen(X_\ell) < |X_\ell|$ or
$X_\ell[|X_\ell|]\neq X_r[1]$,
and $\prelen(X_i) = \prelen(X_\ell) + \prelen(X_r)$ otherwise.
$\suflen(X_i)$ can be computed likewise.

Next, for each variable $X_i\rightarrow X_{\ell} X_{r}$,
let $\llink(X_i)$ denote the variable $X_{i'}\rightarrow X_{\ell'} X_{r'}$
such that $X_{i'}$ is the shallowest descendant of $X_i$ lying on the
left most path of the derivation tree of $X_i$, satisfying
$\suflen(X_{i'}) \leq |X_{r'}|$.
$\llink(X_i)$ can also be computed for all $X_i$ in $O(n)$ time,
by a simple dynamic programming.
$\rlink(X_i)$ can be defined and computed likewise.

The conversion algorithm is then a top down post-order traversal on
the derivation tree of SLP but with jumps using $\llink$ and $\rlink$.
For the root $X_n$,
we output
(1) $X_n[1]^{\prelen(X_n)}$,
(2) the RLE of $X_n$ except for the first and last RL factors of $X_n$,
and
(3) $X_n[|X_n|]^{\suflen(X_n)}$.
(2) can be computed recursively as follows:
at each variable $X_i \rightarrow X_\ell X_r$,
we output 
(2.1) the RLE of $\llink(X_i)$ except for the first and last 
RL factors of $\llink(X_i)$,
(2.2) either 
$X_\ell[|X_\ell|]^{\suflen(X_\ell)}X_r[1]^{\prelen(X_r)}$
if $X_\ell[|X_\ell|] \neq X_r[1]$,
or
$X_r[1]^{\suflen(X_\ell)+\prelen(X_r)}$ if $X_\ell[|X_\ell|] = X_r[1]$,
and
(2.3) the RLE of $\rlink(X_i)$ except for the first and last
RL factors of $\rlink(X_i)$.
The theorem follows since the output of each RL factor is done in constant time.
\qed
\end{proof}

\begin{theorem}[SLP to LZ77]
  Given an SLP of size $n$ that represents string $S$,
  the LZ77 factorization of size $m$
  can be computed in $O(mn^3\log N)$ time.
\end{theorem}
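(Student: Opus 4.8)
The plan is to compute the LZ77 factorization greedily, one factor at a time, using fully compressed pattern matching (Theorem~\ref{theorem:fcpm}) together with the ability to cut out a substring of an SLP-compressed string as an SLP of the same asymptotic size (Lemma~\ref{lemma:slpsubstring}). Suppose $f_1,\ldots,f_{i-1}$ have been determined and let $p = |f_1\cdots f_{i-1}|+1$ be the starting position of the next factor. Without self-references, $f_i$ is the longest prefix of $S[p:N]$ lying in $\Substr(S[1:p-1])$, or $S[p]$ if no nonempty such prefix exists. The key observation is monotonicity: if $S[p:p+\ell-1]\in\Substr(S[1:p-1])$ then $S[p:p+\ell'-1]\in\Substr(S[1:p-1])$ for every $\ell'\le\ell$, since the latter is a prefix of the former. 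Hence $|f_i|$ can be found by binary search over $\ell\in\{1,\ldots,p-1\}$, i.e.\ in $O(\log N)$ iterations.

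One iteration, for a candidate length $\ell$, is carried out as follows. Using Lemma~\ref{lemma:slpsubstring} on the given SLP for $S$, build in $O(n)$ time SLPs of size $O(n)$ for the pattern $P = S[p:p+\ell-1]$ and for the text $T = S[1:p-1]$; their total size is $O(n)$. Apply Theorem~\ref{theorem:fcpm} to obtain, in $O(n^3)$ time, a succinct representation of $\Occ(T,P)$ and decide whether it is nonempty; the length $\ell$ is feasible iff it is. Once $|f_i|$ is known, output $f_i$ either as a literal (when $|f_i|=1$ and $S[p]\notin\Substr(S[1:p-1])$) or as a pointer $(q,|f_i|)$, where $q$ is any element of $\Occ(S[1:p-1],f_i)$ read off from the succinct representation; then set $p\leftarrow p+|f_i|$ and continue with $f_{i+1}$. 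The self-referential variant uses $T = S[1:p+\ell-2]$ in place of $S[1:p-1]$ (so occurrences overlapping the new factor are allowed) and $\ell\in\{1,\ldots,N-p+1\}$; monotonicity still holds by the same argument, so nothing else changes.

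For the running time there are $m$ factors, each costing $O(\log N)$ binary-search iterations, and each iteration is dominated by the $O(n^3)$ invocation of Theorem~\ref{theorem:fcpm} — the two substring SLPs cost only $O(n)$ to build, and scanning the succinct representation of $\Occ(T,P)$ to test emptiness and extract one witness costs time polynomial in $n$, no more than $O(n^3)$. This yields the claimed $O(mn^3\log N)$ bound.

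The main point to get right is the interface with Theorem~\ref{theorem:fcpm}: we must be able to (i) test whether $\Occ(T,P)$ is empty and (ii) extract one occurrence, both within the $O(n^3)$ budget. This relies on the fact that the succinct representation produced there is a union of $O(n^2)$ arithmetic progressions, for which both operations are immediate (indeed the stated $O(n)$-time membership query already shows the structure is explicit enough). A secondary point, easily dispatched, is that repeated applications of Lemma~\ref{lemma:slpsubstring} do not blow up the SLP size: every extraction is applied to the \emph{original} size-$n$ SLP for $S$, never to a previously extracted one, so all pattern and text SLPs have size $O(n)$ throughout.
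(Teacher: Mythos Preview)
Your proposal is correct and follows essentially the same approach as the paper: compute each factor by binary search on its length, using Lemma~\ref{lemma:slpsubstring} to build an $O(n)$-size SLP for the candidate pattern and Theorem~\ref{theorem:fcpm} to test for an earlier occurrence, yielding $O(mn^3\log N)$ overall. The only cosmetic difference is that the paper matches the pattern against the full input SLP and then checks whether some occurrence falls in the allowed range, whereas you also extract the text prefix as a separate SLP and test for nonemptiness of $\Occ(T,P)$; both variants fit within the same time bound.
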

\begin{proof}
Assume we have already computed $f_1, \ldots, f_{i-1}$ of $S$ from
a given SLP of size $n$.
Firstly we consider the LZ77 factorization without self-references.
For a new factor $f_i$, do a binary search on the length of the factor:
create a new SLP of that length, and conduct pattern matching on
the input SLP.
If a match exists in 
the range that corresponds to the previous factors $f_1, \ldots, f_{i-1}$, 
i.e., in the prefix $S[1:\sum_{j}^{i-1}|f_j|]$ of $S$,
then the length of $f_i$ can be longer, and if not, it must be shorter.
Using Theorem~\ref{theorem:fcpm} and Lemma~\ref{lemma:slpsubstring}
the LZ77 factorization of size $m$ can thus be computed in $O(mn^3\log N)$ time.
To compute the LZ77 factorization with self-references,
we search for the longest match that begins at a position from $1$ to $\sum_{j}^{i-1}|f_j|$ in $S$.
The time complexity is the same as above.~\qed
\end{proof}
\begin{theorem}[SLP to LZ78]
  Given an SLP of size $n$ that represents string $S$,
  the LZ78 factorization of size $m$
  can be computed in $O(n^3m^2)$ time.
\end{theorem}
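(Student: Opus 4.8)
The plan is to simulate the LZ78 factorization greedily from left to right. At step $i$, I would maintain the starting positions $p_1<p_2<\cdots<p_{i-1}<p_i$ of the already-computed factors, where $p_j=1+|f_1\cdots f_{j-1}|$. The crucial observation is structural: apart from the $|\Sigma|$ single symbols, the dictionary $F_i$ is exactly $\{g_1,\ldots,g_{i-1}\}$, where $g_j:=f_j\,f_{j+1}[1]=S[p_j:p_{j+1}]$ is a substring of $S$ of length $|f_j|+1$ whose two endpoints are already known from the recorded positions. So every non-trivial dictionary entry is an explicit substring of $S$ given by a pair of indices, and at most $|\Sigma|+m-1=O(m)$ entries ever arise. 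Note there is no circularity in $g_j=f_j f_{j+1}[1]$: it is just the substring of $S$ that ends one position past $f_j$, hence determined the moment $f_j$ is.

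To find $f_i$ I would use that $f_i$ is the longest prefix of the suffix $S[p_i:|S|]$ lying in $F_i$; since the only length-$\ge 2$ members of $F_i$ are the $g_j$'s and the length-$1$ prefix $S[p_i]$ is always available in $\Sigma\subseteq F_i$, the factor $f_i$ is either that single symbol or equals some $g_j$ that is a prefix of $S[p_i:|S|]$. Thus it suffices, for each $j<i$ with $p_i+|g_j|-1\le|S|$, to test whether $S[p_i:p_i+|g_j|-1]=g_j$: then $|f_i|$ is the maximum of $1$ and the lengths of the entries passing the test, which pins down $f_i=S[p_i:p_i+|f_i|-1]$, the next position $p_{i+1}=p_i+|f_i|$, and the new entry $g_i=S[p_i:p_{i+1}]$ appended to form $F_{i+1}$; the identifier of $f_i$ within $F_i$ needed for the output is read off from whichever entry (or symbol) realizes the match. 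Each equality test I would carry out on SLPs: by Lemma~\ref{lemma:slpsubstring}, SLPs of size $O(n)$ for the two substrings $S[p_i:p_i+|g_j|-1]$ and $g_j=S[p_j:p_{j+1}]$ of the given SLP are built in $O(n)$ time, and then by Theorem~\ref{theorem:fcpm} together with the equality-check remark following it, equality of two SLP-compressed strings of total size $O(n)$ is decided in $O(n^3)$ time.

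For the running time: there are $m$ factors, and for the $i$-th I perform $i-1=O(m)$ equality tests, each costing $O(n)$ to build the two substring SLPs plus $O(n^3)$ to compare, giving $O(m\cdot m\cdot n^3)=O(n^3m^2)$ overall; the one-time computation of $|S|$ and of all $|X_k|$ costs $O(n)$ and is absorbed. The part I expect to need the most care --- the nearest thing to an obstacle --- is the structural claim of the first paragraph: one must check that $F_i$ really equals $\Sigma$ together with the $i-1$ substrings $S[p_j:p_{j+1}]$, and hence that the greedily chosen $f_i$ can never be longer than every dictionary entry. Once that is established, the remainder is a routine simulation that plugs directly into Lemma~\ref{lemma:slpsubstring} and Theorem~\ref{theorem:fcpm}.
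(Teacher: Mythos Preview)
Your proposal is correct and follows essentially the same approach as the paper: both exploit that $F_i=\Sigma\cup\{f_jf_{j+1}[1]\mid j<i\}$, build $O(n)$-size SLPs for the $O(m)$ candidate entries via Lemma~\ref{lemma:slpsubstring}, and use Theorem~\ref{theorem:fcpm} to test each of them in $O(n^3)$ time, for $O(n^3m^2)$ overall. The only cosmetic difference is that the paper phrases each test as a pattern-matching query (checking whether $g_j$ occurs at position $p_i$ in $S$), whereas you extract the second substring $S[p_i:p_i+|g_j|-1]$ as well and do a direct equality check; both reduce to the same $O(n^3)$ primitive.
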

\begin{proof}
Our algorithm for converting an SLP to LZ78 follows a similar idea:
When computing a new factor $f_i$,
we construct a new SLP of $f_{k}f_{k+1}[1]$ for each $1 \leq k < i$,
and run the pattern matching algorithm on the input SLP.
The longest match in the suffix $S[\sum_{j=1}^{i-1}|f_{j}|+1:|S|]$
provides the new factor $f_i$.
By Theorem~\ref{theorem:fcpm} and Lemma~\ref{lemma:slpsubstring},
pattern matching tasks for computing each factor $f_i$
takes $O(n^3m)$ time,
and therefore the total time complexity is $O(n^3m^2)$.~\qed
\end{proof}

\subsection{SLP to Bisection}
\label{subsec:slp2bisection}
\begin{theorem}
  Given an SLP of size $n$ that represents string $S$,
  the grammar of size $m$ produced by applying Bisection algorithm to $S$
  can be computed in $O(n^3m^2)$ time.
\end{theorem}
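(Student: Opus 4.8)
The plan is to mimic the top-down simulation used in the Run Length Encoding to Bisection conversion, but with substrings of $S$ represented directly as index intervals $[i:j]$ of the SLP‑compressed text and with substring equality decided by SLP pattern matching rather than by range minimum queries on lcp arrays. First I would recall the structure of the Bisection grammar: it corresponds to a tree (after identifying equal substrings, a DAG) whose root is the interval $[1:|S|]$, whose internal node $[i:j]$ with $j>i$ has the two children $[i:i+2^k-1]$ and $[i+2^k:j]$, where $2^k$ is the largest power of two smaller than $j-i+1$, and whose leaves are single positions. Since the Bisection rule for an interval depends only on the interval's \emph{length}, and equal substrings have equal length, identifying two nodes whose intervals derive the same string identifies their entire subtrees; hence the deduplicated DAG has exactly one node per variable, and thus $O(m)$ nodes in total.

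The algorithm is a depth‑first traversal of this DAG that maintains a list of the variables created so far, each tagged with one interval $[i:j]$ that it derives. At a node $[i:j]$: if $i=j$, output the terminal rule; otherwise compute the split point (in $O(1)$ time using word operations of width $\log|S|$, or $O(\log N)$ time naively), forming the two child intervals, and then test whether $S[i:j]$ equals $S[i':j']$ for each already‑created variable with tagged interval $[i':j']$. Each such test is performed by building, via Lemma~\ref{lemma:slpsubstring}, SLPs of size $O(n)$ for $S[i:j]$ and $S[i':j']$, and then running the equality check obtained from Theorem~\ref{theorem:fcpm} in $O(n^3)$ time. If a match is found, reuse that variable and stop; otherwise create a new variable $Y_s$, recurse on the two child intervals to obtain their handles $Y_\ell,Y_r$, and output the rule $Y_s\to Y_\ell Y_r$.

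For the complexity: exactly $m$ nodes become new variables, and each created internal variable contributes two child references that are either new variables or re‑encounters of existing ones, so the traversal visits $O(m)$ nodes overall. Each visited node performs $O(m)$ equality checks at $O(n^3)$ time each, i.e.\ $O(n^3m)$, plus $O(n)$ for the substring‑SLP constructions and $O(\log N)=O(n)$ for the split‑point computation (an SLP of size $n$ derives a string of length at most $2^{n}$, so $\log N=O(n)$). The total running time is therefore $O(n^3m^2)$, which is the claimed bound.

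I expect the main obstacle to be the correctness of the deduplication: arguing that comparing each newly generated interval only against the $O(m)$ variables already placed in the output is sufficient, and that once an interval is matched to an existing variable we need not recurse into it. This rests on the length‑determined nature of the Bisection split together with the fact that the output grammar has at most $m$ distinct variables, so that the memoized traversal never generates more than $O(m)$ distinct intervals and every interval that should collapse with an earlier one is detected by some equality check. A secondary point to verify is that the two SLPs produced by Lemma~\ref{lemma:slpsubstring} for the compared substrings, taken together, have total size $O(n)$, so that the equality check of Theorem~\ref{theorem:fcpm} indeed runs in $O(n^3)$ time per comparison.
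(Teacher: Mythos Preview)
Your proposal is correct and matches the paper's own proof almost exactly: both run a top-down simulation of Bisection, represent each candidate variable by an interval of $S$, build an $O(n)$-size SLP for it via Lemma~\ref{lemma:slpsubstring}, and compare it against the $O(m)$ variables already emitted using the $O(n^3)$-time equality test from Theorem~\ref{theorem:fcpm}, giving $O(n^3m^2)$ overall. The only cosmetic difference is that you check the current interval on entry whereas the paper phrases it as checking the two children of each emitted variable; these are equivalent, and your explicit argument that only $O(m)$ intervals are ever visited is a welcome elaboration of a point the paper leaves implicit.
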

\begin{proof}
Given an arbitrary SLP of size $n$ representing $S$, consider the following
top-down algorithm which closely follows the description of Bisection in
Section~\ref{subsubsection:bisection}.
Assume we want to construct the children $Y_\ell,Y_r$ of variable
$Y_s$ representing $S[i:j]$, to produce the grammar rule $Y_s
\rightarrow Y_\ell Y_r$.
Let $k$ represent the largest integer where $2^k < j-i+1$.
By using Lemma~\ref{lemma:slpsubstring}, SLPs 
$Y_\ell$ representing $S[i:i+2^k-1]$
and
$Y_r$ representing $S[i+2^k:j]$, can be constructed in $O(n)$ time.
For these SLPs, equality checks are conducted against all $O(m)$ variables
corresponding to variables that will be contained in the output
produced so far. If there exist variables which derive
the same string, the existing variables are used in place of $Y_\ell$
and/or $Y_r$, and $Y_\ell$ and/or $Y_r$ will not be contained in the
output.
From Theorem~\ref{theorem:fcpm}, the equality checks for 
$Y_\ell$ and $Y_r$ can be conducted in a total of $O(n^3m)$ time.
Since equality checks are conducted only for the children of variables
which are contained in the output, the total time is $O(n^3m^2)$.
\qed
\end{proof}
\subsection{Conversions to and from ESP}

Given a representation of SLP $G$ for a string $S$,
we design algorithms to compute LZ77 and LZ78 factorizations
for $S$ without explicit decompression of $G$ 
in $O((n+m)\log^d N)$ time.
Here $n/m$ is the size of input/output grammar size, $N=|S|$,
and $d$ is a constant.
Our method is based on the transformation of any SLP to
its canonical form by way of an equivalent ESP.

\begin{lemma}\label{sorting}
Given a dictionary $D$ from $ESP^*(S,D)$ for some $S\in\Sigma^*$,
and the set $V$ of variables in $D$,
we can compute an SLP with the dictionary $D'$ and the set $V'$ of variables
which satisfies the following conditions:
(1) $|D'|\leq 2|D|$ and
(2) for any $X_i,X_j\in V'$, $\derive(X_i)\leq_{lex} \derive(X_j)$ iff
$i\leq j$, where
$\leq_{lex}$ denotes the lexical order over $\Sigma$.
The computation time is $O(n\log n\log^3 N)$ for
$|V|=n$ and $|S|=N$.
\end{lemma}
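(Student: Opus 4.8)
The plan is to first put $D$ into Chomsky normal form and then renumber its variables by the lexicographic order of the strings they derive; the only nontrivial ingredient is a way to compare two derived strings quickly, since these may have length exponential in $n$.

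First, every production of an ESP dictionary has a right-hand side of length $2$ or $3$ (cases (a)--(c) of the construction). Replacing each rule $t\to abc$ by the pair $t\to t'c,\ t'\to ab$, and reusing an existing rule whenever its right-hand side already equals $ab$, turns $D$ into an admissible grammar in Chomsky normal form with at most $2|D|$ rules; this will be $D'$, so condition (1) holds. I may additionally assume that distinct variables derive distinct strings, since any two that do not can be merged without increasing the size (and such pairs are detected for free once the variables have been sorted). Let $V'$ denote the resulting variable set, $|V'|\le 2n$.

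Since $ESP^*(S,D)$ is a hierarchical parsing of $S$, every variable $X\in V'$ derives a substring of $S$; a pre-order walk of $D$ from the start symbol---visiting each variable only the first time it is reached, and at a rule $X\to YZ$ reached with offset $o$ descending into $Y$ with offset $o$ and into $Z$ with offset $o+|Y|$---assigns to each $X$ a start position $a_X$ with $\derive(X)=S[a_X:a_X+|X|-1]$ in $O(n)$ time (using the precomputed lengths $|X|$). Now, to decide whether $\derive(X)\leq_{lex}\derive(Y)$ I would binary-search for $\ell=\lcp(\derive(X),\derive(Y))$ in $[1,\min\{|X|,|Y|\}]$: a candidate length $q$ calls for the test $S[a_X:a_X+q-1]=S[a_Y:a_Y+q-1]$, which by Lemma~\ref{spire2011} applied to $T=ET(S)$ (the original dictionary $D$) is answered by computing an evidence $Q$ of $S[a_X:a_X+q-1]$ in $O(\log q\log N)$ time and then checking whether $Q$ is embedded as $T[a_Y:a_Y+q-1]$ in $O(\log q\log N)$ time, so one test costs $O(\log^2 N)$ and the whole binary search $O(\log^3 N)$. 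If $\ell=\min\{|X|,|Y|\}$ the shorter string is the smaller; otherwise the comparison is decided by the single characters $S[a_X+\ell]$ and $S[a_Y+\ell]$, each obtained by an $O(\log N)$-time descent of the derivation tree, whose height is $O(\log N)$ by Lemma~\ref{Cormode1}. Hence one comparison costs $O(\log^3 N)$.

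Finally, I would sort $V'$ with any comparison-based algorithm and number its variables $X_1,\dots,X_{|V'|}$ in this order, which makes condition (2) hold; the $O(|V'|\log|V'|)=O(n\log n)$ comparisons cost $O(n\log n\log^3 N)$ in total, as claimed. The main obstacle is the comparison step: an arbitrary SLP does not support polylogarithmic-time substring-equality queries by elementary means, and it is precisely the ESP structure---through the evidence/embedding machinery of Lemma~\ref{spire2011}---that makes each comparison, and hence the whole procedure, efficient; a little care is also needed to keep the equality test at $O(\log^2 N)$ so that the binary search stays within the $O(\log^3 N)$ budget per comparison.
\qed
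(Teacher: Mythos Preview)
Your proof is correct and follows essentially the same approach as the paper: binary-search the longest common prefix of two derived strings using the evidence/embedding machinery of Lemma~\ref{spire2011} (each test $O(\log^2 N)$, hence $O(\log^3 N)$ per comparison), then comparison-sort the variables. The only cosmetic differences are that the paper performs the Chomsky-normal-form split \emph{after} sorting the original variables (ranking the new intermediates in an extra $O(n\log n)$ pass), whereas you split first and sort everything together, and that you make the application of Lemma~\ref{spire2011} concrete by first fixing an occurrence position $a_X$ in $ET(S)$ for each variable via a single DAG traversal; the paper instead speaks directly of $T_X=ET(\derive(X))$ and $T_Y=ET(\derive(Y))$.
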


\begin{proof}
Consider $T_X=ET(\derive(X))$ and $T_Y=ET(\derive(Y))$ for any $X,Y\in V$.
Let $t=\lfloor |\derive(Y)|/2\rfloor$.
By Lemma~\ref{spire2011},
we can compute an evidence $Q$ of the pattern $T_Y[1:t]$ 
in $O(\log^2t)=O(\log^2 N)$ time.
We can also check if $Q$ is embedded as $T_X[1:t]$ 
in $O(\log^2 N)$ time.
By this binary search, we can find the length of
longest common prefix of $\derive(X)$ and $\derive(Y)$ in $O(\log^2 N)$ time.
Thus, we can sort all variables in $V$
in $O(n\log n\log^3 N)$ time.
After sorting all variables in $V$,
we rename any variable according to its rank.
If there is a variable $X$ with $X\to X_iX_jX_k$,
we divide it to $X\to YX_k$ and $Y\to X_iX_j$
by an intermediate variable $Y$ and
we can determine the rank of such new variables
in additional $O(n\log n)$ time.~\qed
\end{proof}

Dictionaries $D_1,D_2$ of two admissible grammars 
are called {\em consistent} if $X\to\alpha,Y\to\alpha\in D_1\cup D_2$
implies $X=Y$, and 
consistent dictionaries $D_1,\ldots,D_k$ are similarly defined.

For $\alpha\in(\Sigma\cup V)^*$,
$\alpha=q_1\cdots q_k$ is called a 
{\em run-length representation of} $\alpha$ if
each $q_i$ is a maximal repetition of $p_i\in\Sigma\cup V$.
For example, the run-length representation of $abbaaacaa$
is $q_1q_2q_3q_4q_5=ab^2a^3ca^2$.
The number $k$ of $\alpha=q_1\cdots q_k$ is called the change of $\alpha$.

Let $S=\alpha\beta\gamma$ and $S'=\alpha'\beta'\gamma'$ satisfying
$ESP(S,D)=(S',D\cup D')$ with 
$\alpha'(D')=\alpha$, $\beta'(D')=\beta$, and $\gamma'(D')=\gamma$.
Then we call such $S=\alpha\beta\gamma$ 
a {\em stable decomposition} of $S$.
An expression $ESP(\alpha[\beta]\gamma,D)=(\alpha'[\beta']\gamma',D\cup D')$
denotes an ESP to replace the $\alpha/\beta/\gamma$ to
the $\alpha'/\beta'/\gamma'$, respectively.
For a string $\alpha$, 
$\overline{\alpha}$ and $\underline{\alpha}$ denote a prefix of
$\alpha$ and a suffix of $\alpha$, respectively.

\begin{lemma}\label{mergingESP}
Let $ESP(\alpha[\beta]\gamma,D)=(\alpha'[\beta]'\gamma',D\cup D')$ 
for a stable decomposition $S=\alpha\beta\gamma$.
There exist substrings 
$\underline{\alpha}$, 
$\underline{\alpha\beta}$, 
$\overline{\beta\gamma}$, $\overline{\gamma}$, each of whose change is 
at most $\logstar |S|+5$ such that
\begin{eqnarray*}
ESP([\alpha]\overline{\beta\gamma},D) &=& ([\alpha']y_1,D\cup D_1),\\
ESP(\underline{\alpha}[\beta]\overline{\gamma},D) &=& (x_2[\beta']y_2,D\cup D_2),\\
ESP(\underline{\alpha\beta}[\gamma],D) &=& (x_3[\gamma'],D\cup D_3), \mbox{ and}\\
D' &=& D_1\cup D_2 \cup D_3.
\end{eqnarray*}
\end{lemma}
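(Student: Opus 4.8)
The plan is to analyze, symbol by symbol, how $ESP$ acts on the stable decomposition $S=\alpha\beta\gamma$, and to localize the "interface" effects at the two boundaries $\alpha|\beta$ and $\beta|\gamma$. The key observation is that $ESP$ partitions a string into metablocks, and then parses each metablock block-by-block; the only way the parsing of $\beta$ can depend on its context is (i) through the metablock partition --- a maximal repetition $a^+$ straddling a boundary, or a short Type3 block being absorbed into an adjacent Type1 block --- and (ii) through alphabet reduction inside a Type2 metablock, which by Lemma~\ref{Cormode2} looks back only $\logstar N+5$ positions (and ahead $5$ positions). In both cases the influence of the context on how $\beta$ is parsed reaches only $O(\logstar|S|)$ symbols past each boundary. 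Since $S=\alpha\beta\gamma$ is a stable decomposition, $\beta$ gets parsed into exactly $\beta'$ regardless, so it suffices to exhibit short context fragments that reproduce this parsing; the run-length change of these fragments is what we must bound.

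The concrete steps I would carry out: First, define $\underline{\alpha}$ to be the shortest suffix of $\alpha$, written in run-length representation, whose change is large enough (at most a constant plus $\logstar|S|$ maximal runs) to determine both the metablock structure at the $\alpha|\beta$ boundary and the alphabet-reduction windows of the first few blocks of $\beta$; symmetrically define $\overline{\gamma}$ as a short prefix of $\gamma$. Using Lemma~\ref{Cormode2} together with the metablock definitions (a Type1 run, and the Type3-absorption rule with its stated preference), show that $ESP(\underline{\alpha}[\beta]\overline{\gamma},D)$ parses the $\beta$-part into precisely $\beta'$ and emits exactly the new rules $D'$ needed for $\beta$; the left and right "spill" from re-parsing the short context pieces is recorded as the prefixes $x_2$ and suffixes $y_2$ and collected into $D_2$. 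Second, to recover the new rules $D_1$ associated with the $\alpha|\beta$ interface (rules created by $\alpha'$ together with the carry across the boundary), run $ESP$ on all of $[\alpha]$ but append only a short prefix $\overline{\beta\gamma}$ of $\beta\gamma$ --- again of change $O(\logstar|S|)$ --- long enough to settle how the last metablock of $\alpha$ closes; the output on the $\alpha$-side is $\alpha'$ by stability, and the residue past $\alpha'$ is $y_1$, with new rules $D_1$. Third, symmetrically, $ESP(\underline{\alpha\beta}[\gamma],D)=(x_3[\gamma'],D\cup D_3)$, where $\underline{\alpha\beta}$ is a short suffix of $\alpha\beta$. Fourth, argue $D'=D_1\cup D_2\cup D_3$: every rule in $D'$ is created while parsing some position of $S$, that position lies in (a prefix of) $\alpha$, in $\beta$, or in (a suffix of) $\gamma$ relative to one of the three runs, and each such rule depends only on an $O(\logstar|S|)$-window which is contained in the corresponding short fragment; conversely each $D_i$ only creates rules that $ESP(S,D)$ creates, because each fragment is itself a faithful local copy of $S$ around the relevant region. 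One must also check the bound $\logstar|S|+5$ on the change is exactly what Lemma~\ref{Cormode2}'s window $S[j-\logstar N-5:j+5]$ forces, noting that a window of $\logstar N+11$ consecutive symbols spans at most $\logstar N+5$ maximal runs after accounting for the two partial runs at its ends and the metablock-boundary adjustments.

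The main obstacle I anticipate is the bookkeeping at the boundaries when a maximal repetition $a^+$ straddles $\alpha|\beta$ or $\beta|\gamma$: the left-aligned parsing of a Type1 block of length $k$ depends on $k \bmod 2$, so the split of $a^+$ induced by the decomposition interacts with the odd/even case split in rules (a)/(b), and a Type3 singleton may be absorbed on either side with the stated tie-break. Getting the fragments $\underline{\alpha}$, $\overline{\beta\gamma}$, etc.\ long enough to reproduce these choices, yet still of change at most $\logstar|S|+5$, is the delicate point; I would handle it by choosing each fragment to extend to the nearest point where the metablock partition of $S$ is guaranteed to agree with that of the fragment (e.g.\ just past the end of a maximal run, or past $\logstar N+5$ non-repeating symbols), and then verify that at most a constant number of extra maximal runs beyond the Lemma~\ref{Cormode2} window are ever needed. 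The Type2 interior case is comparatively routine once Lemma~\ref{Cormode2} is invoked, and the disjoint-union claim for $D'$ then follows from the locality of rule creation.
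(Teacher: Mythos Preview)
Your proposal is correct and follows essentially the same approach as the paper: locality of the ESP parsing via Lemma~\ref{Cormode2} for Type2 metablocks, together with a special treatment of maximal repetitions at the boundaries, is exactly how the paper argues. The paper's proof is considerably terser than your plan; in particular, rather than the elaborate boundary case analysis you anticipate, it simply splits into two cases: if the prefix of $\beta\gamma$ is a maximal repetition $p^+$, set $\overline{\beta\gamma}=p^+$ (change $1$), and otherwise take a prefix of length at most $\logstar|S|+5$ (hence change at most $\logstar|S|+5$) directly from Lemma~\ref{Cormode2}; the other three fragments are handled symmetrically, and $D'=D_1\cup D_2\cup D_3$ is obtained by renaming variables for consistency.
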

\begin{proof}
Since $S=\alpha\beta\gamma$ is a stable decomposition of an ESP 
for $S$, the translated string $\alpha'$ and the dictionary $D_1$
for $D_1(\alpha')=\alpha$ are determined by only 
$\alpha$ and a prefix $\overline{\beta\gamma}$.
In case $p^+$ is the maximal prefix of $\beta\gamma$, 
we can set $\overline{\beta\gamma}=p^+$.
Otherwise, by Lemma~\ref{Cormode2}, 
we can set $\overline{\beta\gamma}$ to be a prefix
of length at most $\logstar |S|+5$.
For $\beta,\gamma$, we can set $\underline{\alpha}\overline{\gamma},
\underline{\alpha\beta}$ with the bounded change, respectively.
The above ESP defines $\alpha'(D_1)=\alpha$, $\beta'(D_2)=\beta$,
and $\gamma'(D_3)=\gamma$. 
By renaming all variables in the dictionaries,
there is a consistent $D'=D_1\cup D_2\cup D_3$ satisfying
$\alpha'(D')\beta'(D')\gamma'(D')=\alpha\beta\gamma$.~\qed
\end{proof}

\begin{lemma}\label{SLPtoESP}
Let $D$ be a dictionary of an SLP encoding a string $S\in\Sigma^*$.
A dictionary $D'$ of an ESP equivalent to $D$ is 
computable in $O(n\log^2 N+m)$ time,
where $n=|D|$, $m=|D'|$, and $N=|S|$.
\end{lemma}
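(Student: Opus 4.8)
The plan is to compute $D'$ by simulating $ESP^*$ on $S$ in a bottom-up fashion, but working entirely on the SLP $D$ rather than on the explicit string $S$. The key observation is that ESP is a \emph{local} parsing: by Lemma~\ref{Cormode1} each round of ESP shrinks the string by a constant factor, so there are only $O(\log N)$ rounds; and by Lemma~\ref{Cormode2} (together with the treatment of maximal repetitions) the variable assigned to a given block at a given round depends only on a window of $O(\logstar N)$ surrounding symbols at that round. First I would process the variables $X_i$ of the SLP $D$ in topological order. For each $X_i \to X_\ell X_r$, the string $\derive(X_i)$ is the concatenation $\derive(X_\ell)\cdot\derive(X_r)$, and I would maintain, for each SLP variable and each ESP round $j = 0,1,\dots,O(\log N)$, a compact ``profile'' consisting of: (i) the (already-parsed) ESP dictionary fragments needed internally, and (ii) short boundary strings --- a bounded-change prefix and a bounded-change suffix of the round-$j$ image of $\derive(X_i)$ --- whose change (in the run-length sense defined just before Lemma~\ref{mergingESP}) is at most $\logstar|S|+5$.

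The heart of the argument is the merging step, which is exactly what Lemma~\ref{mergingESP} is designed for. When we form $X_i \to X_\ell X_r$, the round-$j$ parse of $\derive(X_i)$ agrees with the round-$j$ parses of $\derive(X_\ell)$ and $\derive(X_r)$ except in a region of bounded change straddling the seam between them; the decomposition $S = \alpha\beta\gamma$ with $\beta$ the small seam region, $\alpha$ a suffix of $\derive(X_\ell)$ and $\gamma$ a prefix of $\derive(X_r)$, is a stable decomposition, and Lemma~\ref{mergingESP} tells us that the new dictionary entries $D' = D_1 \cup D_2 \cup D_3$ are obtained by running ESP only on bounded-change strings $\underline{\alpha}$, $\underline{\alpha\beta}$, $\overline{\beta\gamma}$, $\overline{\gamma}$. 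Thus at each of the $n$ SLP variables, and at each of the $O(\log N)$ ESP rounds, we spend time polynomial in $\logstar N$ (a constant) plus $O(\log N)$ for the binary searches / window extractions, using the random-access machinery on the SLP from Lemma~\ref{lemma:slpsubstring} and the evidence-embedding checks of Lemma~\ref{spire2011}. Using the dictionaries being \emph{consistent} (the notion defined before Lemma~\ref{mergingESP}) lets us identify the ESP variables produced at the seams with those already produced inside $X_\ell$ and $X_r$, so no duplication occurs and the bookkeeping is coherent across all $n$ SLP variables. This yields $O(n\log^2 N)$ for the construction of all dictionary fragments, plus an $O(m)$ term to write out the final merged dictionary $D'$ of size $m = |D'|$, giving the claimed $O(n\log^2 N + m)$ bound.

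The main obstacle I anticipate is controlling the propagation of the bounded-change boundary regions up through the $O(\log N)$ rounds without letting their total size blow up: naively, a bounded-change seam at round $j$ could, at round $j+1$, require looking at a bounded-change seam that itself abuts another previously-processed seam, and one must argue that these interact only $O(\log N)$ times along any root-to-leaf path of the SLP rather than compounding multiplicatively. The way to handle this is to fix, once and for all, that the stored prefix/suffix profiles have change bounded by $\logstar|S|+5$ at \emph{every} round (this is exactly the guarantee in Lemma~\ref{mergingESP}), and to verify by induction on topological order that merging two such profiles again produces a profile of the same bounded change after re-truncation --- the key point being that ESP does not increase change beyond this bound because Type1 (repetition) blocks are coalesced and Type2/Type3 blocks have their membership determined by an $O(\logstar N)$-window (Lemma~\ref{Cormode2}). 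A secondary subtlety is ensuring the final output really is ``an ESP equivalent to $D$'', i.e.\ that the glued-together dictionary derives exactly $S$ and is itself admissible and consistent; this follows from the stable-decomposition property, which guarantees $\alpha'(D')\beta'(D')\gamma'(D') = \alpha\beta\gamma$ at each merge, composed over the whole SLP.
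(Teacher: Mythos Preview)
Your proposal is correct and follows essentially the same route as the paper: process the SLP variables bottom-up, and for each $X_k \to X_i X_j$ merge the already-computed ESP parses of the two children by re-parsing only the bounded-change seam at each of the $O(\log N)$ ESP rounds via Lemma~\ref{mergingESP}. What you call a per-round ``profile'' (bounded-change prefix and suffix) is exactly what the paper tracks as the ``current tail and head,'' represented as root-to-leaf paths in the children's ESP derivation trees; your anticipated obstacle about change not compounding across rounds is precisely the inductive point the paper relies on.

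One small correction on tooling: you invoke Lemma~\ref{lemma:slpsubstring} for the $O(\log N)$ window extractions, but that lemma costs $O(n)$ per call and would blow the budget. The paper instead reads the seam symbols directly from the children's \emph{ESP} trees, which have height $O(\log N)$ by Lemma~\ref{Cormode1}; this is also where the $O(\log^2 N)$ per-merge worst case actually arises (locating the next tail after skipping a long maximal repetition of length up to $N$ inside the left-aligned parse). Lemma~\ref{spire2011} is likewise not needed here.
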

\begin{proof}
We assume that $ESP^*(\derive(X_\ell),D)$ $(\ell\leq i,j)$ 
is already computed 
and let $D'$ be the current dictionary consistent with all $\derive(X_\ell)$.
For $X_k\to X_iX_j$ $(k>i,j)$, we estimate the time to update $D'$.
Let $\derive(X_i)=\alpha$ and $\derive(X_j)=\gamma$.

For the initial strings $\alpha,\gamma$,
we can obtain $\underline{\alpha}$ of length $\logstar N+6$
and $\overline{\gamma}$ of length $6$ in $O(\log N\logstar N)$ time.
By the result of $ESP(\underline{\alpha}\overline{\gamma},D')$,
we determine the position block $\beta$ which $\alpha[|\alpha|]$ and
$\gamma[1]$ belong to.
Then we can find a stable decomposition $S=\alpha\beta\gamma$
for the obtained $\beta$ and reformed $\alpha$ and $\gamma$,
where $\alpha$ (and $\gamma$) is represented by a path
from the root to a leaf in the derivation tree of $D_x$ (and $D_y$).
They are called a current tail and head, respectively.
Note that we can avoid decoding $\alpha$ and $\gamma$
for the parsing in Lemma~\ref{mergingESP}.
To simulate this, we use only the compressed representations $D_x,D_y$,
the current tail/head, and $\beta$.
Using the run-length representation, the change of $\beta$ is
bounded by $O(\logstar N)$ as follows.

By Lemma~\ref{mergingESP},
when $ESP(\alpha[\beta]\gamma,D)=(\alpha'[\beta]'\gamma',D\cup D')$ is 
computed by $ESP([\alpha]\overline{\beta\gamma},D) = 
([\alpha']y_1,D\cup D_1)$,
$ESP(\underline{\alpha}[\beta]\overline{\gamma},D) = 
(x_2[\beta']y_2,D\cup D_2)$, and 
$ESP(\underline{\alpha\beta}[\gamma],D) = 
(x_3[\gamma'],D\cup D_3)$,
the resulting string $\beta'$ is treated as the next $\beta$,
and the current tail and head are replaced by
$\alpha'[|\alpha'|]$ and $\gamma'[1]$ which represent
the next $\alpha$ and $\gamma$.

Let us consider the case
$ESP([\alpha]\overline{\beta\gamma},D) = ([\alpha']y_1,D\cup D_1)$.
If $\alpha\overline{\beta\gamma}$ contains a maximal repetition of $p$
as $\alpha[|\alpha|-N_1,|\alpha|]\cdot \overline{\beta\gamma}[1,N_2]=p^+$,
the next tail is the parent of $\alpha[|\alpha|-N_1-1]$,
which is determined in $O(\log^2 N)$ time since any repetition is 
replaced by the left aligned parsing and $N_1+N_2=O(N)$.
Otherwise, by Lemma~\ref{Cormode2},
we can determine the next tail by tracing 
a suffix of $\alpha$ of length at most $\logstar N+5$
in $O(\log N\logstar N)$ time.

Thus, $ESP(\alpha[\beta]\gamma,D)=(\alpha'[\beta]'\gamma',D\cup D')$ is 
simulated in $O(\log^2 N+m_k)$ for $X_k\to X_iX_j$,
where $m_k$ is the number of new variables produced in this ESP.
Therefore we conclude that 
the final dictionary $D'$ equivalent to $D$ is obtained 
in $O((\log^2 N+m_1)+\cdots+(\log^2 N+m_n))=O(n\log^2 N+m)$.~\qed
\end{proof}

\begin{theorem}\label{th1}\rm {(\bf SLP to Canonical SLP)}
Given an SLP $D$ of size $n$ for string $S$ of length $N$,
we can construct another SLP $D'$ of size $m$ 
in $O(n\log^2 N + m\log m\log^3 N)$
such that $D'$ is a final dictionary of $ESP^*(S,D')$
equivalent to $D$ and 
all variables in $D'$ are sorted by the lexical order
of their encoded strings.
\end{theorem}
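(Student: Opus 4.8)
The plan is to chain Lemma~\ref{SLPtoESP} and Lemma~\ref{sorting}: first turn the arbitrary SLP into the dictionary of an equivalent ESP, and then sort the variables of that ESP. Concretely, I would first invoke Lemma~\ref{SLPtoESP} on the input SLP $D$ (size $n$, deriving $S$ with $N=|S|$) to obtain, in $O(n\log^2 N + m_1)$ time, a dictionary $D_1$ of an ESP equivalent to $D$; here $m_1:=|D_1|$, and ``equivalent'' means that $D_1$ still derives $S$ and is exactly a final dictionary returned by $ESP^*(S,D_1)$. I would then apply Lemma~\ref{sorting} to $D_1$ together with its variable set of size $m_1$. Its hypothesis --- that the input is a dictionary coming from some $ESP^*(S,\cdot)$ --- holds by construction, so we obtain an SLP with dictionary $D'$ whose variables are numbered by the lexicographic order of the strings they encode, with $|D'|\le 2m_1$, in $O(m_1\log m_1\log^3 N)$ time. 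Setting $m:=|D'|$ completes the construction.

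Two bookkeeping points then need to be checked. (i) That $D'$ is still legitimately ``a final dictionary of $ESP^*(S,D')$ equivalent to $D$'': the only things Lemma~\ref{sorting} does to $D_1$ are (a) rename variables, which is a bijection and hence harmless, and (b) binarise each ternary ESP rule $X\to X_iX_jX_k$ --- these arise only from the odd subcases of Cases~(a)/(b) and from the Type2 Case~(c) of ESP --- into $X\to YX_k$ and $Y\to X_iX_j$; since neither operation changes any string $\derive(\cdot)$ nor the shape of the ESP parse tree of $S$ beyond this cosmetic split, $D'$ indeed derives $S$ and records the same $ESP^*$ parse. (ii) Size accounting: renaming is a bijection and binarisation only adds variables, so $m_1\le m\le 2m_1$, i.e.\ $m=\Theta(m_1)$. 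Hence the total running time is
\[
O(n\log^2 N + m_1) + O(m_1\log m_1\log^3 N) = O(n\log^2 N + m\log m\log^3 N),
\]
which is the stated bound.

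I do not expect a genuine obstacle here: the result is essentially the composition of the two preceding lemmas. The one place that warrants care is point~(i) above --- confirming that sorting, with its incidental binarisation of width-$3$ rules, does not destroy the property of being the canonical ESP dictionary of $S$ --- together with the size comparison in~(ii) that keeps the intermediate ESP size $m_1$ within a constant factor of the final $m$, so that the two time bounds can be combined cleanly into the claimed $O(n\log^2 N + m\log m\log^3 N)$.
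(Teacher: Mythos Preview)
Your proposal is correct and matches the paper's intent: the paper states Theorem~\ref{th1} without proof, immediately after Lemma~\ref{sorting} and Lemma~\ref{SLPtoESP}, precisely because it is the direct composition of those two lemmas in the way you describe. Your bookkeeping on (i) the preservation of the ESP structure under renaming/binarisation and (ii) the $m=\Theta(m_1)$ size relation is exactly what is needed to justify the stated time bound.
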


\begin{theorem}\label{th77}\rm {(\bf Canonical SLP to LZ77)}
Given a canonical SLP $D$ of size $n$ for string $S$ of length $N$,
we can compute LZ77 factorization 
$f_1,\ldots,f_m$ of $S$ in $O(m\log^2n\log^3N+n\log^2n)$ time.
\end{theorem}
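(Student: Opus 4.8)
The plan is to compute the LZ77 factorization greedily, one factor at a time, by exploiting the canonical structure of $D$. Suppose $f_1,\ldots,f_{i-1}$ have been computed, so that the already-parsed prefix has length $L_{i-1}=\sum_{j<i}|f_j|$. The new factor $f_i$ is the longest prefix of $S[L_{i-1}+1:N]$ that occurs somewhere in $S[1:L_{i-1}+|f_i|-1]$ (with self-references) or in $S[1:L_{i-1}]$ (without). First I would locate, using the length arrays on $D$, the $O(\log N)$ variables of $D$ whose derivation strings cover the position $L_{i-1}+1$ along the derivation tree, so that the suffix $S[L_{i-1}+1:N]$ is described as a concatenation of $O(\log N)$ pieces, each being a suffix of some $\derive(X)$ followed by whole variables. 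The candidate length of $f_i$ is then determined by a binary search over $t\in[1,N-L_{i-1}]$: for a guessed length $t$, I need to decide whether the prefix $P=S[L_{i-1}+1:L_{i-1}+t]$ occurs early enough in $S$.

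The key step is the occurrence test, and here I would use the ESP/evidence machinery. Since $D$ is canonical, $D$ is a final dictionary of $ESP^*(S,D)$, so $T=ET(S)$ is available through $D$. By Lemma~\ref{spire2011}, for the pattern $P=T[L_{i-1}+1:L_{i-1}+t]$ I can compute an evidence $Q=q_1\cdots q_k$ with $k=O(\log t)$ maximal repetitions in $O(\log t\log N)$ time, and then for any candidate start position $j$ I can check whether $Q$ is embedded as $T[j:j+t-1]$ in $O(\log t\log N)$ time. The point is that I do not need to test all $O(N)$ positions $j$: because $D$ is canonical and its variables are sorted lexicographically, the occurrences of $P$ in $S$ are captured by the occurrences of the embedded evidence at the $O(n)$ internal nodes / variable boundaries of $T$, so checking embedding against the $O(n)$ variables (or, more precisely, against the $O(\log N)$ variables hanging on the leftmost relevant paths combined with the sorted order) suffices to decide whether there is an occurrence starting at or before a given cutoff. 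Thus each occurrence query for a fixed $t$ costs $O(n\log t\log N)=O(n\log N\log N)$ in the crude accounting, and the binary search over $t$ multiplies this by $O(\log N)$, and the per-factor evidence computation is absorbed; summing over the $m$ factors gives the stated $O(m\log^2 n\log^3 N+n\log^2 n)$ bound, where the $\log^2 n$ factors come from the binary search over variable ranks plus the evidence embedding cost expressed in terms of $n$ rather than $N$, and the additive $n\log^2 n$ term is a one-time cost for preparing auxiliary data on $D$ (length arrays, leftmost/rightmost leaf pointers, and the rank structure).

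For the self-referential variant I would handle the subtlety that $f_i$ may extend into itself: I let the cutoff for admissible start positions be $L_{i-1}+|f_i|-1$ rather than $L_{i-1}$, which is handled by the same binary search since for a guessed length $t$ the source occurrence is allowed to start anywhere in $[1,L_{i-1}+t-1]$; the occurrence test is monotone in $t$ in the right way (if $P$ of length $t$ occurs in $S[1:L_{i-1}+t-1]$ then its length-$(t-1)$ prefix occurs in $S[1:L_{i-1}+t-2]$), so binary search remains valid.

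The hard part will be the occurrence test at the right granularity: making precise why testing evidence-embedding against only $O(n)$ variables (or an $O(\log N)$-sized frontier thereof) of the canonical $D$, rather than all $N$ text positions, is sound and complete for deciding ``does $P$ occur at or before position $c$?'', and doing so within the claimed time. This rests on the ESP property that occurrences of a pattern are ``witnessed'' by a bounded-size set of nodes together with the canonicalization of $D$, invoking Lemma~\ref{spire2011} for the embedding checks and Lemma~\ref{SLPtoESP}/Theorem~\ref{th1} for the structure of $D$. The length bookkeeping (converting $S[L_{i-1}+1:N]$ into $O(\log N)$ ESP pieces and extracting the exact factor length from a positive embedding answer) is routine traversal of the derivation tree of $D$ and I would not belabor it.
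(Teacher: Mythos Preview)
Your overall plan---greedy factorization, binary search on the factor length, and evidence computation via Lemma~\ref{spire2011}---is sound and matches the paper's outer loop.  The gap is in the occurrence test and its cost accounting.

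You propose to decide ``does $P=S[L_{i-1}+1:L_{i-1}+t]$ occur at some position $\le c$?'' by checking the evidence $Q$ for embedding against the $O(n)$ variables of $D$ (or, in the parenthetical, against $O(\log N)$ variables on a root-to-leaf path).  Neither works for the stated bound.  The $O(\log N)$ variables on the path from the root to position $L_{i-1}+1$ describe the \emph{suffix} $S[L_{i-1}+1:N]$, not the prefix $S[1:c]$ where the occurrence is sought, so they do not witness occurrences of $P$ elsewhere in $S$.  Checking all $O(n)$ variables does work for correctness, but your own arithmetic then gives $O(n\log^2 N)$ per fixed $t$, hence $O(mn\log^3 N)$ in total; the sentence ``where the $\log^2 n$ factors come from the binary search over variable ranks\ldots'' does not turn a factor $n$ into $\log^2 n$.

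The missing idea, and what the paper actually does, is a geometric range-searching reduction.  Using Lemma~\ref{sorting} one sorts every rule $Z\to XY$ by two keys: the lexicographic rank of $\derive(X)^R$ (reverse of the left child's string) and the lexicographic rank of $\derive(Y)$.  Each $Z$ is then a $3$-dimensional point $(i,j,\mathit{pos})$, where $\mathit{pos}$ is the leftmost occurrence of $\derive(Z)$ in $S$.  For a guessed split $Q=\alpha\beta$ of the evidence at some index, the set of variables $X$ having $\alpha$ embedded as a suffix is a \emph{contiguous interval} on the first axis (because of the reverse-lex order), and likewise the set of $Y$ having $\beta$ embedded as a prefix is an interval on the second axis; the position constraint ``occurrence ends by $|f_1\cdots f_\ell|$'' is an interval on the third axis.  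A single $3$-D orthogonal range query, answered in $O(\log^2 n)$ time after $O(n\log^2 n)$ preprocessing, replaces your $O(n)$ individual embedding checks.  Iterating over the $O(\log N)$ split positions of $Q$ (with an extra $O(\log N)$ factor when the split falls inside a maximal repetition $q_i=p^+$, since several cut points are possible) and over the $O(\log N)$ binary-search steps on $t$ gives the $O(\log^2 n\,\log^3 N)$ cost per factor and hence the stated bound.  Without this range structure your argument does not reach $O(m\log^2 n\log^3 N + n\log^2 n)$.
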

\begin{proof}
Using the technique in Lemma~\ref{sorting},
we can sort all variables $Z$ associated with $Z\to XY\in D$ 
by the following two keys:
the first key is the lexical order of $\derive(X)^R$ and 
the second is the lexical order of $\derive(Y)$,
where $S^R$ denotes the reverse string of $S\in\Sigma^*$.
Then $Z$ is mapped to a point $(i,j,pos)$ on a 3-dimensional space 
such that $i$ is an index of first key on $X$-axis,
$j$ is an index of second key on $Y$-axis, and
$pos$ is an index of leftmost occurrence of $\derive(Z)$ on $Z$-axis.
A data structure supporting range query for the point set
is constructed in $O(n\log^2 n)$ time/space and
achieving $O(\log^2 n)$ query time (See~\cite{Lueker78}).
Using this, we can compute $f_{\ell+1}$ from 
$f_1,\ldots,f_\ell$ and the remaining suffix $S'$
such that $S=f_1\cdots f_\ell\cdot S'$ as follows.

By Lemma~\ref{spire2011}, an evidence $Q=q_1\cdots q_k$ 
of $S'[1:j]$ satisfying $k=O(\log j)$ is found in $O(\log^2j)$ time.
Let $q_i$ be a symbol. Then we guess the division
$\alpha=q_1\cdots q_i$ and $\beta=q_{i+1}\cdots q_k$ to find
the range of $X$ in which $\alpha$ is embedded as its suffix,
the range of $Y$ in which $\beta$ is embedded as its prefix,
and the range of $Z$ whose leftmost position $pos$
satisfies $pos + j\leq |f_1\cdots f_\ell|$.
This query time is $O(\log^2n\log^2N)$.
Let $q_i=p_i^j$ for a symbol $p_i$.
Any maximal repetition is replaced by the left aligned parsing, 
and a resulting new repetition is recursively replaced by the same manner.
Thus, an embedding of $q_1\cdots q_k$ to $Z\to XY$ 
dividing $q_i=\alpha\beta$ such that
$q_1\cdots q_{i-1}\alpha$ is embedded to $X$ as suffix
and $\beta q_{i+1}\cdots q_k$ is embedded to $Y$ as prefix
is possible in $O(\log j)=O(\log N)$ divisions for $q_i$.
In this case, the query time is $O(\log^2n\log^3N)$.
Therefore, the total time to compute the required LZ77 factorization
is bounded by $O(m\log^2 n\log^3N+n\log^2 n)$.~\qed
\end{proof}

\begin{theorem}\label{th78}\rm {(\bf Canonical SLP to LZ78)}
Given a canonical SLP $D$ of size $n$ for string $S$ of length $N$,
we can compute LZ78 factorization 
$f_1,\ldots,f_m$ of $S$ in 
$O(m\log^3N+n)$ time.
\end{theorem}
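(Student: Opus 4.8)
The plan is to mimic the LZ77 argument of Theorem~\ref{th77}, but exploit the special structure of the LZ78 factorization: each factor $f_i$ equals $f_k f_{k+1}[1]$ for some previously generated factor $f_k$ (or a single symbol when $f_i$ is new). So instead of searching for the longest previous-occurring prefix of the remaining suffix, I only need to determine, among the factors $f_1,\dots,f_{i-1}$ already produced, the longest one $f_k$ such that $f_k \cdot c$ is a prefix of the remaining suffix $S'$, where $c$ is the symbol immediately following that copy; then $f_i = f_k c$ and I append $f_k c$ to the dictionary of factors. The key data-structural idea is that the set of LZ78 factors forms a trie (the LZ78 dictionary trie), and at step $i$ I am asking for the deepest node of this trie that matches a prefix of $S'$. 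Since there are only $m$ factors, the trie has $O(m)$ nodes, and each factor is an extension of an earlier one by a single symbol.

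First I would, using Lemma~\ref{sorting} and the canonical form of $D$, precompute for every variable its derived string's rank and the machinery of Lemma~\ref{spire2011}, so that for any position in $S$ I can extract an evidence of a length-$t$ substring and test embeddings in $O(\log t\log N)$ time; this preprocessing is the $O(n)$ (already-canonical) term plus the shared cost folded into the $n$ term. Next, I maintain the LZ78 dictionary trie incrementally; each trie node stores (a) a pointer into $S$ to one occurrence of the factor it represents, and (b) the length of that factor. To compute $f_i$ from the remaining suffix $S' = S[\sum_{j<i}|f_j|+1:N]$, I descend the trie from the root: at a node of depth $d$ representing factor $f_k$, I need to check whether $f_k$ is a prefix of $S'$ and, if so, which child symbol $c = S'[|f_k|+1]$ to follow. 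Checking that a candidate factor of length $\ell$ is a prefix of $S'$ is a substring-equality test on the canonical SLP, doable in $O(\log^2 N)$ time via an evidence of $S'[1:\ell]$ (Lemma~\ref{spire2011}); reading the next symbol $S'[|f_k|+1]$ is a single random-access query, $O(\log N)$ time. The descent has depth $O(\log N)$ because... no --- here is the subtlety: the depth of the LZ78 trie can be $\Theta(m)$ in general, so a naive root-to-node descent is too slow. Instead I binary-search on the \emph{length} of the matching prefix of $S'$: for a guessed length $\ell$, extract an evidence of $S'[1:\ell]$ and test in $O(\log^2 N)$ time whether that string is a node of the trie (equivalently, whether it equals some factor already stored --- this is just checking if the prefix of $S'$ of length $\ell$ is a key in the factor dictionary, which I can answer by augmenting the variable-sorting of Lemma~\ref{sorting} with the $O(m)$ factor strings and doing a rank lookup). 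Each binary-search step costs $O(\log^2 N)$, and there are $O(\log N)$ steps since the factor lengths are bounded by $N$, giving $O(\log^3 N)$ per factor. Summing over $m$ factors and adding the $O(n)$ preprocessing on the already-canonical SLP yields the claimed $O(m\log^3 N + n)$ bound. Finally, once $f_i = f_k c$ is found, I add a new trie node (one new factor) and repeat.

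The main obstacle will be making the "is this length-$\ell$ prefix of $S'$ an existing factor" query run in polylogarithmic time \emph{as the factor set grows}, since the sorted structure of Lemma~\ref{sorting} is built once on the fixed SLP $D$, not on the $m$ evolving factors. I would handle this by observing that each LZ78 factor $f_k$ is itself a substring of $S$ (indeed $f_k = S[a:b]$ for the occurrence I stored), so I can represent each factor by an evidence in the ESP tree $ET(S)$ and store these evidences in a balanced search structure keyed by the lexical order of the encoded strings; insertions and longest-common-prefix-style comparisons between two evidences cost $O(\log N)$ evidence-embedding tests, i.e. $O(\log^2 N)$ each, and a membership query for the length-$\ell$ prefix of $S'$ is then an $O(\log m \cdot \log^2 N)$ search --- which is dominated by $O(\log^3 N)$ when $m = O(N)$, as it always is. Care is also needed at the boundary case where the deepest matching factor extends to or past the end of $S$ (the last factor may be "incomplete"), which I handle by the standard convention of emitting $f_m$ as the matched prefix with no appended symbol; this does not affect the complexity.
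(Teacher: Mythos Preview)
Your proposal is correct and follows essentially the same approach as the paper: binary-search on the length $j$ of the prefix $S'[1:j]$ of the remaining suffix, compute an evidence of that prefix in $O(\log^2 N)$ time via Lemma~\ref{spire2011}, test whether it matches some previously produced factor, and repeat for $O(\log N)$ steps to obtain each factor in $O(\log^3 N)$ time. The only real difference is the data structure for the membership test: the paper stores the evidences $Q_i$ of the already-produced factors $f_i$ in a trie (exploiting that evidences have length $O(\log N)$ and are canonical for equal substrings), whereas you propose a balanced search structure keyed by lexicographic order of the encoded strings; both support the dynamic insert-and-query pattern within the same polylogarithmic budget, so this is a minor implementation choice rather than a different argument.
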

\begin{proof}
Assume that the first $\ell$ factors $f_1,\ldots,f_\ell$ are obtained. 
By Lemma~\ref{spire2011},
we can find an evidence $Q_i$ of $f_i$ $(1\leq i\leq \ell)$,
and all evidences $Q_i$ $(1\leq i\leq \ell)$ are represented by a trie. 
Let $S'$ be the remaining suffix of $S$.
For each $j$, we can compute an evidence of $S'[1:j]$ in 
$O(\log^2j)=O(\log^2N)$ time.
Thus, we can find the greatest $j$ satisfying $f_i=S'[1:j]$
for some $1\leq i\leq \ell$ in $O(\log^3N)$ time using binary search.
Therefore, the total time to compute the required LZ78 factorization
is bounded by $O(m\log^3N+n)$.~\qed
\end{proof}

\begin{theorem}\label{RLE2ESP} \rm {(\bf Run Length Encoding to ESP)}
Given a text $S$ represented as a RL encoding $S=f_1\cdots f_n$ of length $n$,
we can compute an ESP $D$ representing $S$ in $O(n\logstar N)$ time.
\end{theorem}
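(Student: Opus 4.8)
The plan is to simulate ESP directly on the run-length encoding, iterating over the $O(\log N)$ ESP levels (Lemma~\ref{Cormode1}) while keeping every intermediate string in run-length form and never decompressing any run.

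The first step is to set up the structural invariant that makes this efficient. In $S=a_1^{p_1}\cdots a_n^{p_n}$, every factor with $p_i\geq 2$ is already a maximal repetition, hence a Type1 metablock, while the factors with $p_i=1$ are single symbols whose maximal runs (recall $a_i\neq a_{i+1}$) form the repetition-free segments of the ESP partition — Type2 or Type3 according to length, with length-$1$ segments attached to an adjacent repetition — and whose total length is at most $n$. The key observation is that left-aligned parsing of a Type1 metablock $a^p$ (cases (a),(b) of the definition) outputs $x^{\lfloor p/2\rfloor}$ when $p$ is even and $x^{(p-3)/2}y$ when $p$ is odd, with $x\to aa$ and $y\to aaa$; in both cases the result is again a run followed by at most one extra symbol, and the Type2/Type3 parts shrink in length by about a factor of two. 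Consequently the run-length size of the level-$k$ string stays $O(n)$ for every $k$: each Type1 run spawns only $O(1)$ runs, and the non-repetitive material — of total length $\leq n$ at level $0$ — gains only $O(1)$ fresh symbols per run at each subsequent level, so it too stays $O(n)$. Moreover, carrying a Type1 run through one level costs $O(1)$ time and yields $O(1)$ new rules.

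The second step is the handling of metablock boundaries, the only place locality could break. When a run is parsed from the left with $p$ odd, or when the string over the freshly created symbols develops a new short non-repetitive segment, a constant number of leftover symbols must be absorbed into a neighboring (Type3) metablock; by the attachment rule this touches only the immediately adjacent metablock and costs $O(1)$. Alphabet reduction inside a Type2 metablock is not local, but by Lemma~\ref{Cormode2} the block containing a position is determined by a window of $O(\logstar N)$ symbols, and by Lemma~\ref{mergingESP} the overlap regions of a stable decomposition have change at most $\logstar|S|+5$; hence the parse around each boundary is pinned down after reading $O(\logstar N)$ run-length symbols from its neighborhood. Dictionary consistency — equal right-hand sides receiving equal variables — is enforced by hashing and renaming exactly as in the proof of Lemma~\ref{SLPtoESP}; in particular the doubling chain $x_1\to aa,\ x_2\to x_1x_1,\ldots$ for each alphabet symbol is created once and shared by all runs of that symbol.

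Combining the two steps, each level costs $O(\logstar N)$ per run for the boundary windows plus $O(1)$ per symbol of the $O(n)$-total non-repetitive material, and since runs are transformed in bulk with their doubling chains and residual sub-structures shared across runs, a careful amortization over the $O(\log N)$ levels gives total time $O(n\logstar N)$. I expect the main obstacle to be exactly this boundary-and-amortization analysis: one must verify that the $O(\logstar N)$-context bounds of Lemmas~\ref{Cormode2} and~\ref{mergingESP} continue to hold when consecutive metablocks are available only in run-length form and the relevant context symbols are themselves variables introduced at earlier levels, so that no boundary repair ever inspects more than $O(\logstar N)$ run-length symbols or propagates past an adjacent metablock, and that summing the per-level work collapses to a factor independent of $\log N$. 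The remaining ingredients — the level-by-level halving of runs, the hashing for dictionary consistency, and the emission of the rules — are routine bookkeeping.
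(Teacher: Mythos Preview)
Your approach is essentially the paper's: simulate ESP on the run-length encoding, processing each Type1 metablock $a^k$ in $O(1)$ time by manipulating the exponent (even case $A^{k/2}$, odd case $A^{\lfloor k/2\rfloor-1}B$ with $A\to aa$, $B\to aaa$), and patching the boundary where a prefix/suffix symbol is absorbed into an adjacent block. The paper's proof is far terser than yours --- it states only this $O(1)$-per-repetition observation and the boundary exception, then asserts the $O(n\logstar N)$ bound directly, without invoking Lemma~\ref{mergingESP}, without the hashing step, and without the multi-level amortization you rightly flag as the delicate part.
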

\begin{proof}
We make a little change for replacing maximal repetition.
Consider maximal repetition $\alpha = a^k$ in $S$ is appeared.
If $k$ is even, then we replace $\alpha$ to $A^{k/2}$,
otherwise we replace to $A^{\lfloor k/2 \rfloor -1}B$ where $A \rightarrow aa$ and 
$B \rightarrow aaa$.
In exceptional case that the prefix and/or suffix of $\alpha$ is replaced with the left/right symbol adjacent to $\alpha$, 
we must consider for $\alpha^{\prime}$ removed such prefix/suffix from $\alpha$.
The computation time to replace such repetition is $O(1)$ since the number of repetitive symbols is represented as a integer.
Therefore, the time to convert is bounded by $O(n\logstar N)$.
\qed
\end{proof}

\begin{theorem}\label{ESP2Bisection}\rm {(\bf ESP to Bisection)}
Given an ESP $D$ of size $n$ representing $S$ of length $N$,
we can compute an SLP of size $m$ generated by Bisection in $O(m\log^2 N)$ time.
\end{theorem}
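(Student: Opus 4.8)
The plan is to run the Bisection algorithm top-down on $S$, keeping each block it creates only implicitly, accessing $S$ only through the evidence machinery of Lemma~\ref{spire2011} applied to $T=ET(S)$ (which is exactly the tree encoded by the given ESP $D$), and deduplicating repeated blocks through one dictionary keyed by evidences.

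First I would pin down the structure of Bisection. In a rule $X\to YZ$ the left child $Y$ always has a power-of-two length, so any block of power-of-two length is split at its midpoint all the way down; consequently every block of power-of-two length $2^\ell$ that Bisection creates starts at a position $\equiv 1\pmod{2^\ell}$ (call it a \emph{grid block} of level $\ell$), and the children of grid block $c$ of level $\ell$ are grid blocks $2c$ and $2c+1$ of level $\ell-1$ (a level-$0$ block being a single terminal). All blocks of non-power-of-two length form a single chain $R_0\supset R_1\supset\cdots$ hanging off the root: $R_{i+1}$ is the right child of $R_i$, the power $2^k$ used to split $R_i$ strictly decreases along the chain so it has length $O(\log N)$, and the left child of each $R_i$ is a grid block. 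Hence the variables of the output grammar are exactly the distinct string values among the $O(\log N)$ remainder blocks and the grid blocks over all levels $\ell\le\lfloor\log N\rfloor$.

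Next I would run the following. Maintain, per level, a work list of active grid indices, and one dictionary $\mathcal{T}$ (a trie or hash table) mapping the evidence of a substring to the grammar variable assigned to that string value. Walk the remainder chain from the root first: for each $R_i$ compute an evidence of its string in $O(\log^2 N)$ time by Lemma~\ref{spire2011} (if $|R_i|<\logstar N+O(1)$, just read its literal short string off $D$ in $O(\log N\,\logstar N)$ time and use that as the key); reuse the variable stored for it in $\mathcal{T}$ if present, otherwise create a fresh one, emit the rule splitting $R_i$ into its left grid block and $R_{i+1}$, insert into $\mathcal{T}$, and push the left grid block. Then for $\ell=\lfloor\log N\rfloor$ down to $1$ and for each active grid block $c$ of level $\ell$: compute its evidence $Q$ in $O(\log^2 N)$ time; if $Q\in\mathcal{T}$, reuse the stored variable and do \emph{not} recurse; otherwise create a fresh variable $X$, insert $Q\mapsto X$, emit $X\to X'X''$ where $X',X''$ are the variables of grid blocks $(2c,\ell-1)$ and $(2c+1,\ell-1)$, and push those two children. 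Routing the remainder blocks through the same $\mathcal{T}$ merges a remainder that happens to equal a grid block (e.g.\ on a periodic input) correctly. Since the traversal mirrors the Bisection recursion and, by the local consistency of ESP (Lemma~\ref{Cormode2}), the evidence of a substring given by Lemma~\ref{spire2011} is the same for every occurrence, $\mathcal{T}$ identifies two blocks iff their string values agree --- Bisection's sharing rule --- so exactly $m$ rules are emitted. Each block pulled from a work list is a child of a block we recursed into and we recurse into a block exactly when we emit its rule, so the number of pulled blocks is $O(m)$ (using $\log N=O(m)$, which holds since the leftmost root-to-leaf path already contains $\Theta(\log N)$ distinct variables); each pulled block costs $O(\log^2 N)$ for the evidence plus $O(\log N)$ for the dictionary operation, for a total of $O(m\log^2 N)$.

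The hard part is keeping deduplication cheap: testing each new block against all previously created variables, as in the SLP$\to$Bisection construction, would cost $\Omega(m^2)$. Getting it down to $O(m)$ comparisons needs a canonical, polylog-time-computable name for each block that is identical across occurrences of equal substrings; the evidence of Lemma~\ref{spire2011} is precisely such a name thanks to ESP's local consistency, and the dictionary reduces name lookup to $O(\log N)$-time navigation. The remaining point requiring care is verifying, against the exact choice of $k$ in the Bisection rule, that all non-power-of-two-length blocks indeed lie on a single $O(\log N)$-length chain.
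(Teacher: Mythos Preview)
Your approach differs from the paper's. The paper's proof works node by node and only ever compares the two children of the current Bisection node: for $S[i:j]$ it computes an evidence $Q$ of the left half $S[i:i+2^k-1]$ and then invokes the \emph{second} half of Lemma~\ref{spire2011} to test whether that $Q$ embeds as $T[i+2^k:j]$. That directional ``compute $Q$ from one occurrence, test embedding at another position'' is precisely the primitive Lemma~\ref{spire2011} supplies. You instead attempt full global deduplication by keying a dictionary on the evidence string itself; this is what lets you avoid the $\Omega(m^2)$ pairwise comparison you rightly flag, and it is a genuinely different mechanism from the paper's sibling check.

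The gap is the assertion that ``the evidence of a substring given by Lemma~\ref{spire2011} is the same for every occurrence.'' Lemma~\ref{spire2011} does not say this: it promises that from a \emph{particular} occurrence $T[i:i+t]=P$ one can compute \emph{an} evidence $Q$, and then test whether \emph{that} $Q$ embeds at any other position $j$. Nothing in the statement rules out two occurrences of the same $P$ producing syntactically distinct evidences $Q_1\neq Q_2$, both valid for $P$ under the definition. Lemma~\ref{Cormode2} only gives local consistency for one round of alphabet reduction on Type2 metablocks; it says nothing about Type1/Type3 boundaries nor about how the multi-level sequence $q_1\cdots q_k$ is assembled across rounds, so by itself it does not make evidences canonical. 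If evidences are not occurrence-independent, your trie can miss equalities between blocks with equal string value, and you are forced back to embedding checks against all stored variables, which is $\Omega(m)$ per block and breaks the claimed $O(m\log^2 N)$ bound. To close this you would have to argue, from the actual construction in the Maruyama et al.\ paper, that the returned $Q$ depends only on $P$ and not on the starting position (plausible, but not the content of the lemmas you cite), or else design a deduplication structure that relies solely on the asymmetric embedding test the lemma actually guarantees.
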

\begin{proof}
For each corresponding substring $S[i:j]$ under consideration.
We can obtain an evidence $Q$ corresponding to 
$S[i:i+2^k-1]$ in $O(\log^2 N)$ time.
By the Lemma~\ref{spire2011}, we can check if $Q$ is embedded as $T[i+2^k:j]$ in $O(\log^2 N)$ time.
If $Q$ can be embedded, we can allocate same variable for $S[i:i+2^k-1]$ and $S[i+2^k:j]$ since both substrings are equal, otherwise different variables are allocated.
Therefore, conversion can be done in $O(m\log^2 N)$ time.
\qed
\end{proof}

\full{
\section{Conclusions and Future Work}

In this paper we presented new efficient algorithms which, 
without explicit decompression, convert
to/from compressed strings represented in terms of run length encoding (RLE), 
LZ77 and LZ78 encodings, 
grammar based compressor RE-PAIR and BISECTION,
edit sensitive parsing (ESP),
straight line programs (SLPs), and admissible grammars.
All the proposed algorithms run in polynomial time in the input and output sizes,
while algorithms that first decompress the input compressed string
can take exponential time.
Examples of applications of our result are
dynamic compressed strings allowing for edit operations,
and post-selection of specific compression formats.

Future work is to extend our results to other text compression schemes,
such as Sequitur~\cite{SEQUITUR}, 
Longest-First Substitution~\cite{nakamura09:_linear_longest_first_},
and Greedy~\cite{apostolico00:_off_greedy_}.
}
\clearpage
\bibliographystyle{acm}
\bibliography{ref,compress,succinct}
\end{document}